\documentclass[12pt]{article}
\usepackage{amssymb}
\usepackage{amsbsy}
\usepackage{amsthm}
\usepackage{amsfonts}
\usepackage{amsmath}
\usepackage{setspace}
\usepackage{parskip}
\usepackage{geometry}
\usepackage{scalefnt}
\usepackage{hyperref}
\usepackage{cleveref}
\usepackage{enumerate}
\usepackage{natbib}
\usepackage{bm}
\usepackage[bottom]{footmisc}
\usepackage{verbatim}
\usepackage{booktabs}
\usepackage{multirow}
\usepackage{graphicx}
\usepackage{float}
\usepackage{caption}
\usepackage{xcolor}
\usepackage{placeins}

\newtheorem{theorem}{Theorem}

\newtheorem{assumption}{Assumption}[section]

\newtheorem{lemma}{Lemma}[section]

\newtheorem{proposition}{Proposition}

\newtheorem{remark}{Remark}

\renewenvironment{proof}[1][Proof]{\noindent\textbf{#1.} }{\ \rule{0.5em}{0.5em}}

\newcommand{\wh}{\widehat}
\newcommand{\wt}{\widetilde}

\title{\Large Estimation of Random-Coefficient Dynamic Panel Data Models \\ with a Fixed $T$\thanks{We thank Xun Lu, Philip Marx, Matthew Masten and Elie Tamer for helpful discussion and comments.}}
\author{Xun Tang\thanks{Department of Economics, Rice University. Email: xun.tang@rice.edu} \and Pei Yu\thanks{Department of Economics, Rice University. Email: pei.yu@rice.edu}}
\date{\today}

\begin{document}

{\hypersetup{pdfborder={0 0 0}}\maketitle}

\vspace{1.5em}

\begin{abstract}

We study dynamic linear panel data models in which the lagged outcomes and strictly exogenous covariates carry individual-specific coefficients and the time-varying errors have a flexible covariance structure.
With a fixed number of time periods, we point-identify the joint distribution of the random coefficients and the structural errors under a distributional form of strict exogeneity, and propose a closed-form, multi-step estimator based on the inverse Radon transform.
We establish a uniform convergence rate for the estimator of the random coefficient density, as well as uniform consistency of the estimator for the conditional density of the time-varying structural errors.
Monte Carlo simulations demonstrate good finite-sample performance of the estimators.

\end{abstract}

\noindent\textbf{Keywords:} dynamic panel data; random coefficients; heterogeneous state dependence; fixed-$T$; inverse Radon transform

\newpage

\section{Introduction}\label{sec:intro}

Dynamic panel data models capture the persistence of economic outcomes, such as employment, earnings, or firm performance, by including lagged dependent variables among the covariates in structural equations.
They are instrumental for disentangling ``genuine'' state dependence, or the direct impact of past outcomes on current ones, from ``spurious'' dependence caused by persistent unobserved heterogeneity across individuals.
These distinct sources of dependence carry sharply different implications, because whether a transitory shock or a temporary policy intervention could leave a lasting imprint depends on the strength of true state dependence.

As such, dynamic panel models have become a workhorse in fields such as labor economics and industrial organization. Canonical applications include dynamic employment in \citet{ArellanoBond1991, BlundellBond1998}, production functions in \citet{blundell2000gmm}, corporate performance and competition in \citet{nickell1996competition}, firm innovation in \citet{aghion2005competition}, market shares and stock valuation in \citet{blundell1999market}, investments and financial factors in \citet{bond1994dynamic,bond2003financial}.

Earlier econometrics literature had studied dynamic panel models with constant autoregressive coefficients where the state dependence is homogeneous across units.
More recent works accommodated heterogeneous state dependence in workers, households, or firms by including random coefficients on the lagged dependent variables, and found it empirically salient in many settings, e.g., labor income dynamics in \citet{FernandezValGaoLiaoVella2022} in a large-$N$, large-$T$ setting.\footnote{
    Relatedly, \citet{LiuMoonSchorfheide2020} and \citet{LuMiaoSu2024} applied dynamic panel models with random coefficients on strictly exogenous covariates in the contexts of forecasting and program evaluation.}

Bringing random coefficients to bear on dynamic panels raises two challenges that motivate this paper.
The first is to let the lagged dependent variable carry a random coefficient in a short panel.
Because the lagged outcome is predetermined rather than strictly exogenous, heterogeneous autoregressive dynamics are difficult to identify when the number of time periods $T$ is small.
Nonetheless, knowledge of the distribution of the autoregressive coefficient, rather than a single common value or its mean in the population, is needed for answering many questions.
Examples include measuring the mass of units with non- or near-unit-root persistence (which governs long-run responses), and quantifying the spread and skewness of persistence (which affects the shape of the distribution of forecasts or policy targets).
Whether large-$T$ or fixed-$T$ asymptotics is the more suitable framework depends on the application and the data at hand, and neither approach dominates the other; our aim is to complement this developing literature by offering a methodological alternative that is valid when $T$ is fixed.

The second challenge is to recover the joint distribution of the random coefficients \emph{together with} the time-varying errors (random intercepts in the \emph{structural} form), rather than selected moments or a marginal distribution. The distribution of random coefficients reveals how state dependence correlates with heterogeneous responses to the covariates, both of which are persistent; that of the random structural intercepts captures the distribution of time-varying shocks (permanent and transitory) underlying the process.
Knowledge of their joint distribution is needed not only to separate and quantify the stochastic contribution by permanent vs transitory components, but also to simulate counterfactual trajectories in outcomes such as earnings or productivity. Means or low-order moments from their marginal distributions alone cannot answer these questions.

We study the dynamic linear panel data model
\begin{equation*}
    Y_{it} = \gamma_i Y_{it-1} + X_{it}'\beta_{it} + W_{it}'\delta_t + U_{it}, \qquad t = 1,\dots,T,
\end{equation*}
where the autoregressive coefficient $\gamma_i$ and the coefficients $\beta_{it}$ on the strictly exogenous $X_{it}$ are individual-specific, while the coefficients $\delta_t$ on the sequentially exogenous $W_{it}$ are common across units $i$.
The time-varying errors, or ``random structural intercepts'', $U_{it}$, absorb both fixed effects and transitory shocks (e.g., $U_{it}=\alpha_i+\varepsilon_{it}$).
Our object of interest is the joint distribution of $(\gamma_i,\beta_i,U_i)$ conditional on the initial $Y_{i0}$, along with $\delta_t$.

The model identification requires a \emph{distributional} form of strict exogeneity: $(\gamma_i,\beta_i,U_i)$ are jointly independent of the exogenous covariate history conditional on $Y_{i0}$ (Assumption~\ref{assn:indep}).
This condition permits flexible dependence among $\gamma_i$, $\beta_i$, and $U_i$, serial correlation in the errors, and dependence on $Y_{i0}$; it provides the source of variation to identify the full distribution of random coefficients and intercepts (rather than their means) in short panels with fixed $T$.
We impose no restriction on the covariance structure of the time-varying errors, which may be serially correlated and heteroskedastic in the initial condition $Y_{i0}$.
We show that both the joint distribution of $(\gamma_i,\beta_i,U_i)$ given $Y_{i0}$ and the coefficients for sequentially exogenous covariates $\delta_t$ are point identified.

Applying the analog principle to a constructive identification strategy, we propose a closed-form, multi-step estimator which does not involve any numerical optimization, simulation, or solution of an inverse problem beyond some deconvolution steps.
We establish uniform convergence of our estimator for the joint density of $(\gamma_i,\beta_i)$ and that of $U_i$ conditional on the random coefficients.

Our analysis departs from the literature on three fronts, which we preview here and develop in Section~\ref{sec:lit-review}. First, on the \emph{model}: we let the coefficient on the lagged dependent variable be individual-specific and identify its distribution under a \emph{fixed} $T$.
This is the regime in which \citet{Arellano2001} showed that even the mean of a heterogeneous autoregressive coefficient is under-identified when the covariates are only sequentially exogenous.
We circumvent this obstacle by bringing in strictly exogenous covariates under the distributional independence condition above, and we do so without any order condition relating $T$ to the number of covariates, and without the large-$T$ asymptotics that such panels often employ.

Second, on the \emph{target}: we recover the joint distribution of the random coefficients and time-varying errors, rather than a mean, an average partial effect, or a finite set of moments of the random coefficients alone, which the nearest dynamic random-coefficient panel literature targets, often via second-moment conditions, averaging, or partial identification.
In particular, we identify the distribution of the structural random intercepts, i.e., the time-varying errors in the \emph{structural} equation, and not merely that of the slope coefficients.
Triangular and simultaneous-equations models with random coefficients recover the coefficient distributions but leave that of the structural intercepts unidentified; we overcome this obstruction using a new argument that exploits the panel structure.

Third, on the \emph{method}: we adapt the inverse Radon transform used to estimate random-coefficient densities in static cross-sections \citep{hoderlein2010analyzing}, coupled with a deconvolution step, and apply them to an intermediate reduced-form density implied by our panel.
Recovering the structural objects from this reduced-form density is not immediate: it requires a Jacobian change of variables that extracts the joint density of $(\gamma_i,\beta_i)$, and an original argument, central to the proof of \Cref{theorem:two}, that identifies the distribution of $U_i$.
These steps constitute a large part of our methodological contribution.

\section{Related Literature} \label{sec:lit-review}

\noindent \textbullet \quad \textbf{Dynamic linear panel data models with random coefficients.}

Several seminal papers estimated dynamic linear panel data models with \emph{constant} coefficients under a fixed-$T$ setting.
These include \citet{AndersonHsiao1982}, \citet{HoltzEakin1988}, \citet{ArellanoBond1991}, \citet{ArellanoBover1995}, and \citet{BlundellBond1998}.
When the autoregressive coefficient is instead heterogeneous, identification becomes markedly harder: \citet{Arellano2001} showed that with only sequentially exogenous covariates and fixed $T$, even the \emph{mean} of that coefficient is under-identified. This motivates the strictly exogenous covariates and the distributional restriction we introduce in Section~\ref{subsec:iden_distr}.

Some papers studied dynamic linear panel models where strictly exogenous covariates have random coefficients but predetermined and sequentially exogenous ones have \textit{constant} coefficients.
\citet{ArellanoBonhomme2012} identified distributional features of the random coefficients for strictly exogenous covariates using second-moment conditions on time-varying errors.
In a forecasting framework, \citet{LiuMoonSchorfheide2020} constructed point predictors for the posterior mean of heterogeneous coefficients of strictly exogenous covariates and deterministic trends, under the assumption of Gaussian innovations and using Tweedie's formula.
We depart from these papers by letting the lagged dependent variable $Y_{it-1}$ carry a random coefficient, by dispensing with the Gaussian or second-moment conditions on the time-varying errors, and by recovering the joint distribution of random coefficients and intercepts.

\citet{Chamberlain2022} showed that in panel data models with sequentially exogenous covariates and a multi-dimensional vector of unobserved heterogeneity, the common finite-dimensional parameters (such as constant coefficients for other covariates) are not point identified.
\citet{Lee2026} studied the partial identification of a dynamic linear panel model with random coefficients for predetermined covariates and lagged dependent variables.
In contrast, we obtain point (rather than partial) identification of the joint distribution of random coefficients and intercepts by exploiting the full distribution (rather than a finite set of moments) of the outcomes.

Within a framework of potential outcomes and treatment effects, \citet{marx2025heterogeneous} obtain a causal interpretation of the classical IV/GMM estimands from the dynamic-panel literature \citep{AndersonHsiao1982, ArellanoBond1991} under a sequential exchangeability condition.\footnote{Sequential exchangeability is a joint restriction on the dynamics of potential outcomes and on treatment selection. \citet{marx2025heterogeneous} combine it with restrictions on treatment-effect heterogeneity to decompose the IV estimand into a convex-weighted aggregate of heterogeneous, history-dependent treatment effects.} They also relax that condition and identify certain average causal effects by imposing homogeneous autoregressive dynamics on the \emph{untreated} potential outcomes.
Our paper differs from \citet{marx2025heterogeneous} in two major ways. First, we study a distinct random-coefficient dynamic panel model and target a different parameter: the joint distribution of the random coefficients and time-varying errors in a model for \emph{observed} outcomes, rather than aggregates of treatment effects in a \emph{potential}-outcomes framework.
Second, we leverage a different source of variation. Our main identification results (Section~\ref{subsec:iden_distr}) hinge on a strong, distributional form of strict exogeneity of the covariates, whereas \citet{marx2025heterogeneous} include no strictly exogenous covariate by design, and instead use lagged instruments to handle covariates that are not strictly exogenous (e.g., treatments that depend on past observed outcomes).

\noindent \textbullet \quad \textbf{Multidimensional unobserved heterogeneity in linear panels.}

Other papers investigated linear panel models with vector, interactive, or correlated heterogeneity, typically focusing on the mean (instead of distribution) of such heterogeneity.
\citet{GrahamPowell2012} estimated average partial effects (or mean of random coefficients) in an ``irregular'' correlated random coefficient panel, using units whose covariates change little over time and requiring an order condition for just-identification (i.e., the number of periods $T$ equals the dimension of covariates).
Apart from the difference in the target parameter, their method does not apply in our case because the mean of composite random intercepts is not additive in a stationary function of the covariate history and a period-specific constant due to sequentially exogenous covariates.
Moreover, our method does not require any order condition on the number of time periods.

\citet{MoonWeidner2017} estimated a dynamic linear panel model that has a lagged dependent variable with a constant coefficient and interactive fixed effects.
\citet{LuSu2023, LuSu2025} studied linear panel models that allowed for random coefficients for sequentially exogenous covariates, but their theoretical frameworks do not include lagged dependent variables.
\citet{CaoJinLuSu2024} allowed for random coefficients on lagged dependent variables, but required a ``large-$N$, large-$T$'' setting where both the cross-section and time dimensions go to infinity.
In contrast with these papers, our method operates in a short panel ``fixed-$T$'' setting.
Furthermore, while these papers focus on estimating the mean (or a homogeneous/average partial effect) of random slope coefficients, we recover the joint distribution of the random coefficients on lagged dependent variables alongside time-varying errors.

Earlier and wider literature on random-coefficient panel models likewise targeted averages or low-order moments in static designs.
Examples include \citet{Swamy1970}, \citet{Chamberlain1992}, \citet{Wooldridge2005}, \citet{Murtazashvili2008}, and \citet{hsiao2008}.
\citet{Laage2024} estimated a correlated random-coefficient panel model with time-varying endogeneity, showing identification of the mean of random coefficients through control variables. Her model does not accommodate random coefficients for a lagged dependent variable.
\citet{li2026identification} estimated the average partial effect and the local average response in a correlated random coefficient panel data model, where regressors can be correlated with time-varying and individual-specific random coefficients.

\noindent \textbullet \quad \textbf{Triangular and simultaneous systems with random coefficients.}

Our dynamic linear panel model shares the structure of a triangular system, because $Y_{i1}$ feeds into $Y_{i2}$ as a predetermined regressor. Yet our paper differs from those on the triangular system, in terms of the target and the identification strategy. We review the core differences here, and relegate details of the comparison to Appendix~\ref{sec:Appendix_lit}.

\citet{HoderleinHolzmannMeister2017} identified the distribution of the random coefficients in a triangular model under two key restrictions: mutual independence between the coefficients and the instruments and exogenous covariates, and independence of the first-stage slopes from the outcome-equation coefficients (their Theorem 7). As we show in Appendix~\ref{sec:Appendix_lit}, these conditions are relaxed in our setting, and we additionally identify the distribution of random intercepts (time-varying errors) rather than that of the slope coefficients alone.

\citet{MastenTorgovitsky2016} identified an instrumental-variables correlated random coefficient model, relying on independence between the instruments and the coefficients together with a monotone relation between the endogenous covariate and a scalar latent control.
Our model does not fit their control-function framework: the first-stage equation is itself of random-coefficient form, which, as \citet{Imbens2007} and \citet{Kasy2011} showed, admits no such reduced-form, monotone representation with a scalar control.

\citet{masten2018random} treated triangular systems as a special case of simultaneous equations. He identified the joint distribution of the random coefficients for the endogenous regressor and the excluded instrument (his Proposition 4, Section 3.3), assuming the instrument is independent of all unobservables conditional on the shared covariates.
His analysis did not identify the distribution of the time-varying errors (random intercepts), nor that of the
random coefficients on the covariates shared in both equations.\footnote{
    Indeed, \citet{masten2018random} showed the joint distribution of all structural unobservables is not point identified, because the reduced form is a seemingly-unrelated-regression system with common regressors (his Theorem 4).}
In comparison, our panel setting avoids this obstruction, because each period supplies its own exogenous covariates: a suitable linear combination of the outcomes behaves as a single-equation random-coefficient model in \emph{distinct} regressors. Building on this, and varying the combination weights, we obtain a new result---identification of the full joint distribution of all random coefficients \textit{and} the time-varying
errors. This requires an extended argument that exploits the conditional independence between the exogenous covariates and the random coefficients and intercepts given the initial condition.

\noindent \textbullet \quad \textbf{Estimation of linear RC models via inverse Radon transform.}

\citet{Beran1996} and \citet{hoderlein2010analyzing} estimated the joint density of the random coefficients in static cross-section regressions by inverting the Radon transform (which links the conditional density of the outcome to the coefficient density); the latter did so through a kernel estimator with a special Radon-transform kernel and sharp asymptotics. We use this method for the first step in our estimation.

\section{The Model and Identification} \label{sec:iden}
We consider a dynamic panel data model:
\begin{equation} \label{eq:seq-exo-w}
    Y_{it} = \gamma_{i}Y_{it-1} + X_{it}'\beta_{it} + W_{it}'\delta_t + U_{it} \; \text{ for } t=1,2,\ldots,T,
\end{equation}
where \(\gamma_{i}\in\mathbb R\) and \(\beta_{it}\in\mathbb R^{J_x}\) are random coefficients, $ U_{it}\in\mathbb R$ are random intercepts, while $\delta_t\in\mathbb R^{J_w}$ are non-zero constant coefficients. There are no constant intercepts in $X_{it}, W_{it}$.
For a generic random array $\zeta_{it}$, let $ \zeta_i \equiv (\zeta_{i1}',\zeta_{i2}')'$ and $\zeta_i^t \equiv \{\zeta_{is}:s=1,\ldots,t\}$ denote the history up to time $t$, and $\Delta \zeta_{it} \equiv \zeta_{it} - \zeta_{it-1}$.
\medskip

\begin{assumption}[Strict and Sequential Exogeneity with Time Effects] \label{assn:seq-exo}
For $t=1,\ldots,T$,
    $$E(U_{it} \mid X_i^T,W_i^t,Y_{i0},\gamma_i) = \mu_t(Y_{i0}) < \infty \; \text{ almost surely,}$$ where $\mu_t(\cdot)$ is an unknown, unrestricted function.
\end{assumption} \medskip

Under this condition, $X_{it}$ is strictly exogenous while $W_{it}$ is sequentially exogenous.
It permits period-specific fixed effects $\mu_t(Y_{i0})$.
\citet{ArellanoBonhomme2012}
investigated a similar specification where the coefficients for strictly exogenous covariates are random but those for predetermined or sequentially exogenous covariates are all \emph{constant}, and $\mu_t(Y_{i0})=0$ for all $t$.
In comparison, our specification differs by allowing a predetermined regressor, namely, the lagged dependent variable $Y_{it-1}$, to have a random coefficient.

While \Cref{assn:seq-exo} restricts the conditional mean of $U_{it}$ given $\gamma_i,Y_{i0}$, it does allow dependence between $U_i$ and $\gamma_i$ through higher moments (as in the correlated and scale designs in our simulation study).

The first part of our identification results (Section \ref{subsec:iden_moment}) does not require conditions on the second moments of time-varying errors used by \citet{ArellanoBonhomme2012}.
On the other hand, our method does rely on a stronger notion of strict exogeneity, i.e., the distributional independence in \Cref{assn:indep} instead of mean independence (Assumption 1) in \citet{ArellanoBonhomme2012}.

The parameters of interest are the joint distribution of random coefficients and intercepts $(\gamma_i, \beta_{i1}',\ldots,\beta_{iT}', U_{i1},\ldots,U_{iT})$ and the constant coefficients $(\delta_1,\ldots,\delta_T)$.
We prove point identification of these parameters via sequential steps.

\subsection{The conditional mean of random coefficients} \label{subsec:iden_moment}

First, we identify the constant coefficients $\delta_t$ for $t=1,\ldots,T$ and the conditional mean of random coefficients.
To fix ideas, focus on the case with $T=2$.
By recursive substitution and first-differencing,
\begin{align*}
    \begin{split}
     \Delta Y_{i2} & \equiv  Y_{i2} - Y_{i1}  \\
    & =  \gamma_i(\gamma_i-1)Y_{i0} + (\gamma_i-1)X'_{i1} \beta_{i1}  +
       X'_{i2} \beta_{i2} + (\gamma_i-1)W'_{i1} \delta_1 + W'_{i2}\delta_2 + \varpi_i,
    \end{split}
\end{align*}
where \(\varpi_i \equiv \Delta U_{i2} + \gamma_iU_{i1}\).
Let $H_i$ be shorthand for the covariates that are exogenous with respect to $U_i\equiv (U_{i1},U_{i2})'$ conditional on the initial condition $Y_{i0}$. That is:
\[H_{i}\equiv (X_{i1}',X_{i2}',W_{i1}')'.\]
We also maintain a conditional mean independence condition on the random coefficients.
Let $\theta_i \equiv (\gamma_i,\beta_i')'$ with $\beta_i\equiv (\beta_{i1}',\beta_{i2}')'$.

\medskip

\begin{assumption} \label{assn:mean-indep}
$E(\theta_i\mid H_i,Y_{i0}) = E(\theta_i\mid Y_{i0})$ and $E(\theta_i\gamma_i\mid H_i, Y_{i0}) = E(\theta_i\gamma_i \mid Y_{i0})$.
\end{assumption} \medskip

Under Assumptions \ref{assn:seq-exo} and \ref{assn:mean-indep},
\[ E( \varpi_i \mid H_{i}, Y_{i0} ) = E\big[\gamma_i\mu_1(Y_{i0}) + \Delta \mu(Y_{i0}) \mid H_i, Y_{i0} \big] = m(Y_{i0}), \]
where $m(Y_{i0})\equiv E(\gamma_i\mid Y_{i0})\mu_1(Y_{i0}) + \Delta \mu(Y_{i0})$ with $ \Delta \mu \equiv \mu_2 - \mu_1 $, and
\begin{equation} \label{eq:levelreg}
    E(Y_{i1}\mid H_i,Y_{i0}) = \pi_1(Y_{i0}) + X_{i1}'E(\beta_{i1}\mid Y_{i0}) + W_{i1}'\delta_1,
\end{equation}
with \(\pi_1(Y_{i0}) \equiv E(\gamma_i\mid Y_{i0})Y_{i0} + \mu_1(Y_{i0})\), and
\begin{equation} \label{eq:delta-y2}
    E(\Delta Y_{i2} \mid H_i, Y_{i0}) = Z_i'\Psi(Y_{i0}),
\end{equation}
where $Z_i\equiv(\,1,H_i',E(W_{i2}'\mid H_i,Y_{i0})\,)'$ and
\begin{align} \label{defn:Psi}
    \Psi(Y_{i0}) \equiv
        \begin{pmatrix}
           \psi_1(Y_{i0}) \\
           \psi_2(Y_{i0}) \\
           \psi_3(Y_{i0}) \\
           \psi_4(Y_{i0}) \\
           \psi_5(Y_{i0})
        \end{pmatrix}
        \equiv
        \begin{pmatrix}
           E(\gamma_i^2 - \gamma_i\mid Y_{i0}) Y_{i0} + m(Y_{i0}) \\
           E[\beta_{i1}(\gamma_i - 1) \mid Y_{i0}] \\
            E(\beta_{i2}\mid Y_{i0}) \\
           \delta_1E(\gamma_i-1\mid Y_{i0}) \\
           \delta_2
         \end{pmatrix}.
\end{align}
That is, the means of $Y_{i1}$ and $\Delta Y_{i2}$ conditional on $H_i,Y_{i0}$ are both linear in $Z_i$. We maintain the following condition on the (conditional) support of $Z_i$. \medskip

\begin{assumption}[Non-singularity] \label{assn:rank}
    $E(\,\| Z_i \|^2\mid Y_{i0}\,)< \infty$, and the support of $Z_i$ conditional on $Y_{i0}$ is not contained in any proper linear subspace of $\mathbb R^{2(J_x+J_w)+1} $ almost surely.
\end{assumption} \medskip

A sufficient condition for \Cref{assn:rank} is that
for any nonzero $c\in\mathbb R^{J_w}$, $c'E(W_{i2}\mid H_i,Y_{i0})$ is not almost surely affine in $H_i$ given $Y_{i0}$.
That is, nonlinearity in the conditional mean of $W_{i2}$ is essential.
Under \Cref{assn:rank}, both $E(Z_iZ_i'\mid Y_{i0})$ and $E(R_iR_i'\mid Y_{i0})$ are non-singular almost surely, where $R_i\equiv(1,X_{i1}',W_{i1}')'$ is a sub-vector of $Z_i$.
Therefore, we can identify $\pi_1(Y_{i0})$, $E(\beta_{i1}\mid Y_{i0})$, and $\delta_1$ from \eqref{eq:levelreg} using variation in $R_i$ given $ Y_{i0} $, and identify $\Psi(Y_{i0})$ from \eqref{eq:delta-y2} using variation in $Z_i$ given $Y_{i0}$.
It then follows from \eqref{defn:Psi} that
\begin{align}
    & E(\beta_{i2}\mid Y_{i0}) = \psi_3(Y_{i0}),\quad \delta_2 = \psi_5(Y_{i0}), \quad E(\gamma_i\mid Y_{i0}) = \delta_1'\psi_4(Y_{i0})/\|\delta_1\|^2 + 1 , \\
    & \mu_1(Y_{i0}) = \pi_1(Y_{i0}) - E(\gamma_i\mid Y_{i0})Y_{i0}, \quad
    E(\gamma_i\beta_{i1}\mid Y_{i0}) = \psi_2(Y_{i0}) + E(\beta_{i1}\mid Y_{i0}).
    \nonumber
\end{align}
The proposition below collects these identification results based on the mean of $Y_{it}$. \medskip

\begin{proposition} \label{pn:iden-rc-means}
    Under Assumptions \ref{assn:seq-exo}, \ref{assn:mean-indep} and \ref{assn:rank}, $\delta_1$, $\delta_2$, $E(\beta_{i1}\mid Y_{i0})$, $E(\beta_{i2}\mid Y_{i0})$, $E(\gamma_i\mid Y_{i0})$, $E(\gamma_i\beta_{i1}\mid Y_{i0})$ and $\mu_1(Y_{i0})$ are identified from $E(Y_{it}\mid H_i,Y_{i0})$ for $t=1,2$.
\end{proposition}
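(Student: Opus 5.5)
The plan is to make rigorous the derivation sketched just before \Cref{pn:iden-rc-means}. It has three stages. First I verify the two conditional-mean equations \eqref{eq:levelreg} and \eqref{eq:delta-y2} under Assumptions \ref{assn:seq-exo} and \ref{assn:mean-indep}. Second I invoke \Cref{assn:rank} to recover the linear coefficients. Third I solve the resulting algebraic system for the target quantities.

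For \eqref{eq:levelreg}, I take conditional expectations of $Y_{i1}=\gamma_iY_{i0}+X_{i1}'\beta_{i1}+W_{i1}'\delta_1+U_{i1}$ given $(H_i,Y_{i0})$. Since $X_{i1},W_{i1},Y_{i0}$ are measurable with respect to the conditioning set, \Cref{assn:mean-indep} replaces $E(\gamma_i\mid H_i,Y_{i0})$ and $E(\beta_{i1}\mid H_i,Y_{i0})$ by their $Y_{i0}$-conditional versions. The tower property combined with \Cref{assn:seq-exo} (conditioning first on $(X_i^T,W_i^1,Y_{i0},\gamma_i)$, which contains $H_i$) gives $E(U_{i1}\mid H_i,Y_{i0})=\mu_1(Y_{i0})$.

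For \eqref{eq:delta-y2}, I handle the terms of $\Delta Y_{i2}$ one at a time:
\begin{itemize}
\item The terms $\gamma_i(\gamma_i-1)Y_{i0}$ and $(\gamma_i-1)X_{i1}'\beta_{i1}$ are handled by the second part of \Cref{assn:mean-indep}, since $E(\theta_i\gamma_i\mid\cdot)$ covers both $\gamma_i^2$ and $\gamma_i\beta_{i1}$.
\item The term $W_{i2}'\delta_2$ has conditional mean $E(W_{i2}\mid H_i,Y_{i0})'\delta_2$, which is exactly the last block of $Z_i$.
\item For $\varpi_i=\Delta U_{i2}+\gamma_iU_{i1}$, I use $E(U_{i2}\mid H_i,Y_{i0})=\mu_2(Y_{i0})$. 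This follows by iterating from the conditioning set $(X_i^T,W_i^2,Y_{i0},\gamma_i)\supseteq (H_i,Y_{i0})$. I also use $E(\gamma_iU_{i1}\mid H_i,Y_{i0})=E[\gamma_iE(U_{i1}\mid X_i^T,W_{i1},Y_{i0},\gamma_i)\mid H_i,Y_{i0}]=\mu_1(Y_{i0})E(\gamma_i\mid Y_{i0})$. Here it matters that \Cref{assn:seq-exo} conditions on $\gamma_i$.
\end{itemize}

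The second stage is to identify the coefficient vectors. \Cref{assn:rank} gives $E(Z_iZ_i'\mid Y_{i0})$ nonsingular a.s., because $c'E(Z_iZ_i'\mid Y_{i0})c=E((c'Z_i)^2\mid Y_{i0})=0$ would force the support into the hyperplane $c'z=0$. The same argument applies to the subvector $R_i$. Hence $\Psi(Y_{i0})=E(Z_iZ_i'\mid Y_{i0})^{-1}E(Z_i\Delta Y_{i2}\mid Y_{i0})$, and similarly $(\pi_1(Y_{i0}),E(\beta_{i1}\mid Y_{i0})',\delta_1')'$ is obtained by regressing $Y_{i1}$ on $R_i$ given $Y_{i0}$. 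Note that $Z_i$ involves $E(W_{i2}\mid H_i,Y_{i0})$, which is itself identified from the data, so $Z_i$ is an identified regressor.

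The final stage is algebra. It yields $\delta_2=\psi_5$ and $E(\beta_{i2}\mid Y_{i0})=\psi_3$. Since $\delta_1\neq0$ by the model's non-zero assumption, $\psi_4=\delta_1E(\gamma_i-1\mid Y_{i0})$ yields $E(\gamma_i\mid Y_{i0})$ via projection onto $\delta_1$. Then $\mu_1=\pi_1-E(\gamma_i\mid Y_{i0})Y_{i0}$ and $E(\gamma_i\beta_{i1}\mid Y_{i0})=\psi_2+E(\beta_{i1}\mid Y_{i0})$.

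The main obstacle is the $\varpi_i$ step, namely the cross term $\gamma_iU_{i1}$ and the justification of $E(U_{i2}\mid H_i,Y_{i0})$, since $W_{i2}\notin H_i$. These require carefully nesting sigma-fields in \Cref{assn:seq-exo}. A secondary point is that $\delta_2=\psi_5(Y_{i0})$ is identified for each value of $Y_{i0}$; this is consistent because $\delta_2$ is constant.
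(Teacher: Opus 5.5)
Your proposal is correct and follows essentially the same route as the paper. Both derive the linear conditional-mean representations \eqref{eq:levelreg} and \eqref{eq:delta-y2}, use \Cref{assn:rank} to obtain nonsingularity of $E(Z_iZ_i'\mid Y_{i0})$ and $E(R_iR_i'\mid Y_{i0})$, and solve \eqref{defn:Psi} algebraically, with your explicit nesting of the conditioning sets in \Cref{assn:seq-exo} for $E(U_{i2}\mid H_i,Y_{i0})$ and the cross term $\gamma_iU_{i1}$ supplying details the paper leaves implicit.
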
 \medskip

\begin{remark}
    The coefficients for sequentially exogenous covariates $\delta_1,\delta_2$ are all we need for the next step in \Cref{subsec:iden_distr}.
    Nevertheless, we note the identification of the other parameters in \Cref{pn:iden-rc-means} is also useful, as they are obtained from conditional mean outcomes without invoking the distributional form of strict exogeneity in \Cref{assn:indep}.
\end{remark}

\begin{remark}
    The second moment $E(\gamma_i^2\mid Y_{i0})$ and $\mu_2(Y_{i0})$ enter additively in $\psi_1(Y_{i0})$, and cannot be separately identified from the conditional means without further assumptions. We propose two approaches to identify them separately.
    The first is to use a stationarity condition that $ \mu_t(Y_{i0}) $ is identical over $t=1,2$ almost surely. This allows us to recover $E(\gamma_i^2\mid Y_{i0})$ from $\psi_1(Y_{i0})$ using knowledge of parameters identified in \Cref{pn:iden-rc-means}.
    The second approach is to strengthen the mean independence of random intercepts and coefficients in Assumptions \ref{assn:seq-exo} and \ref{assn:mean-indep} to a stronger form of distributional independence (\Cref{assn:indep}), which allows us to recover the distribution (and the second moment) of $\gamma_i$ given $Y_{i0}$ as in \Cref{theorem:two}. We can then recover $\mu_2(Y_{i0})$ from $\psi_1(Y_{i0})$, again using results from \Cref{pn:iden-rc-means}.
\end{remark}

\begin{remark}
    If we strengthen \Cref{assn:mean-indep} with $E(\theta_i\mid Y_{i0})=E(\theta_i)$, $E(\gamma_i^2\mid Y_{i0})=E(\gamma_i^2)$, and $\mu_t(Y_{i0})=\mu_t$, then the conditional means of $Y_{i1}$ and $\Delta Y_{i2}$ are both \emph{unconditionally} linear in $H_i$ \emph{and} $Y_{i0}$. The unconditional expectations of $U_i,\gamma_i,\beta_i$ as well as $\delta_1,\delta_2$ are identified from simple regressions of $Y_{i1},\Delta Y_{i2}$ that pool over $Y_{i0}$ and include it as a regressor along with $H_i$.
\end{remark}

\begin{remark}
    Our method does not need $T$ to be at least as large as the dimension of strictly exogenous covariates; this differs from \citet{ArellanoBonhomme2012} (with $T$ strictly larger than the latter) and \citet{GrahamPowell2012} (with $T$ equal to the latter).
    Furthermore, our model differs from these two papers by allowing the distribution of the random coefficients for strictly exogenous covariates to vary over time, and not to be additive in a stationary function of covariate history. (See Appendix \ref{sec:Appendix_lit} for details.)
\end{remark}

    \Cref{pn:iden-rc-means} uses conditional mean outcomes for identification, and invites comparison with Section 3 of \citet{marx2025heterogeneous}.
    Beyond the difference in target parameters, their identifying assumptions and ours are non-nested. To compare the two, map their treatment $D_{it}$ into our sequentially exogenous covariate $W_{it}$, and their composite error $\theta_t + \alpha_i + \varepsilon_{it}$ into our random intercept $U_{it}$.
    First, \citet{marx2025heterogeneous} focus on the IV/GMM estimand and include \emph{no} strictly exogenous covariate, whereas we require at least one.
    Second, sequential exchangeability in \citet{marx2025heterogeneous} permits past outcomes to feed into the current treatment, carrying information about the fixed effects beyond $Y_{i0}$ and resulting in $E(U_{i2} \mid D_{i1}, D_{i2}, Y_{i0}) \neq \mu_2(Y_{i0})$ in general, thus violating our \Cref{assn:seq-exo}.
    Moreover, the two approaches draw on different variation: we exploit the (conditional) linearity of $E(Y_{it} \mid H_i, Y_{i0})$ in exogenous covariates, whereas \citet{marx2025heterogeneous} use lagged outcomes as instruments in a first-differenced equation.

\subsection{Distribution of random coefficients and intercept} \label{subsec:iden_distr}

Next, we identify the joint distribution of random coefficients and intercepts, using a stronger form of conditional independence.
Throughout this section, we treat $\delta_1,\delta_2$ as known, having been identified in \Cref{pn:iden-rc-means}.
\medskip

\begin{assumption} \label{assn:indep}
    \((\gamma_i,\beta_i,U_i) \perp H_i \mid Y_{i0}\),
    and the conditional distribution of $(\gamma_i,\beta_i,U_i)$ given $ Y_{i0}$ admits a Lebesgue density almost surely.
\end{assumption} \medskip

\Cref{assn:indep} allows flexible dependence among $\gamma_i$, $\beta_i$ and $U_i$, and is compatible with Assumptions \ref{assn:seq-exo} and \ref{assn:mean-indep}.
It also allows the distribution of random coefficients and intercepts to be heterogeneous in the initial condition $Y_{i0}$.
In the special case with additive fixed effects $U_{it} = \alpha_i + \varepsilon_{it}$, \Cref{assn:indep} accommodates general dependence between $Y_{i0},\gamma_i,\beta_i,\alpha_i$ and $\varepsilon_i\equiv(\varepsilon_{i1},\varepsilon_{i2})$, as well as serial correlation of $\varepsilon_{it}$.
\medskip

Define \(\wt Y_{it} \equiv Y_{it} - W'_{it}\delta_t \) for \( t = 1, 2\).
For any $d\equiv(d_1,d_2)\in\mathbb R^2$,
\begin{align}
\label{eq:ty}
    d_1\wt Y_{i1} + d_2\wt Y_{i2}
     =  C_i(d) + X_{i1}'\underset{S_{i1}(d)}{\underbrace{(\beta_{i1}d_1 + \beta_{i1}\gamma_id_2)}} +
           X_{i2}'\underset{S_{i2}(d)}{\underbrace{\beta_{i2}d_2}} + W_{i1}'\underset{S_{i3}(d)}{\underbrace{\delta_1\gamma_id_2}},
\end{align}
where \[C_i(d) \equiv (d_1\gamma_i + d_2\gamma_i^2)Y_{i0} + (d_1+d_2\gamma_i)U_{i1} + d_2U_{i2}.\]
\noindent Let $S_i(d) \equiv (S_{ik}(d):k=1,2,3)$.
Under \Cref{assn:indep},
\begin{equation} \label{eq:cond-indep}
    (\;C_i(d),S_{i}(d)\;)\perp H_i \mid Y_{i0} \text{ for any }d\in\mathbb R^2.
\end{equation}
The {\it first} step in identifying the distribution of random coefficients is to recover the joint distribution of $(\,C_i(d),S_i(d)\,)$ conditional on $Y_{i0}$. \medskip

\begin{assumption}\label{assn:richSupp}\leavevmode
\begin{enumerate}[(i)]
    \item \label{assn:richSupp:i} The support of \(H_i\mid Y_{i0}\) contains an open ball in $\mathbb R^{2J_x+J_w}$ almost surely.
    \item \label{assn:richSupp:ii}   For every $d\in\mathbb R^2$, the conditional distribution of $(C_i(d), S_i(d))$ given $Y_{i0}$ is uniquely determined by its moments, and has finite absolute moments of all orders almost surely.
\end{enumerate}
\end{assumption} \medskip

For linear regressions with random coefficients and intercept in reduced form, \citet{masten2018random} (Lemma 2) showed that the moment-determinacy and finite-moment conditions such as (\ref{assn:richSupp:ii}) are sufficient for point-identification, provided these are independent of the regressors whose joint support satisfies an open-ball condition as in (\ref{assn:richSupp:i}).\footnote{
    The generic pair $(A,B)$ and regressors $Z$ in \citet{masten2018random} correspond to $ (C_i(d),S_i(d)) $ and $H_i$ in our case, respectively. \citet{masten2018random} also showed that if $\mathrm{supp}(H_i\mid Y_{i0})$ is bounded, then the conditions in (\ref{assn:richSupp:ii}) are necessary for identification.}

A primitive sufficient condition for \eqref{assn:richSupp:ii} is that, conditional on $Y_{i0}$, the components of $(\gamma_i,\beta_i,U_i)$ have sub-Gaussian tails, which permits them to have non-compact support.
Every linear combination of $(C_i(d),S_i(d))$, whose entries include the quadratic term $\gamma_i^2 Y_{i0}$ as well as the products $\gamma_i U_{i1}$ and $\gamma_i \beta_{i1}$, is then sub-exponential
and hence satisfies Carleman's condition, ensuring moment determinacy of the joint conditional distribution for every $d \in \mathbb{R}^2$ \citep{Petersen1982}.

The following lemma follows immediately from the conditional independence we established in (\ref{eq:cond-indep}), \Cref{assn:richSupp}, and Lemma~2 of \citet{masten2018random}.

\medskip

\begin{lemma}\label{lm:one}
    Suppose Assumptions \ref{assn:indep} and \ref{assn:richSupp} hold. For all $d\in\mathbb R^2$, the distribution of $(C_i(d),S_i(d))$ given $Y_{i0}$ is identified from that of $(d_1\wt Y_{i1} + d_2\wt Y_{i2},H_i')$ given $Y_{i0}$.
\end{lemma}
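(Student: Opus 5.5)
Fix $d\in\mathbb R^2$ and a value $y_0$ in the support of $Y_{i0}$. The plan is to reduce Lemma~\ref{lm:one} to a conditional cross-sectional linear random-coefficient model and then invoke Lemma~2 of \citet{masten2018random} pointwise in $y_0$.

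First, I will record that $\wt Y_{it}=Y_{it}-W_{it}'\delta_t$ is observable because $\delta_1,\delta_2$ are identified by Proposition~\ref{pn:iden-rc-means}. Hence the scalar outcome $V_i(d)\equiv d_1\wt Y_{i1}+d_2\wt Y_{i2}$ and the regressors $H_i$ have a joint conditional distribution given $Y_{i0}=y_0$ that is identified from the data. By \eqref{eq:ty},
\[
V_i(d)=C_i(d)+H_i'S_i(d),
\]
where $H_i'S_i(d)=X_{i1}'S_{i1}(d)+X_{i2}'S_{i2}(d)+W_{i1}'S_{i3}(d)$. This is exactly the reduced-form linear model with random intercept $A=C_i(d)$, random slopes $B=S_i(d)$ and regressors $Z=H_i$ considered by \citet{masten2018random}. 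I will check that the identity is purely algebraic from \eqref{eq:seq-exo-w} with $T=2$. In particular, $W_{i2}$ has been removed by the construction of $\wt Y_{i2}$, which is why $W_{i2}$ (not independent of $U_i$) does not appear among the regressors.

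Second, I will verify the hypotheses of that lemma conditional on $Y_{i0}=y_0$:
\begin{itemize}
\item Independence, $(C_i(d),S_i(d))\perp H_i\mid Y_{i0}$, is \eqref{eq:cond-indep}. It follows from Assumption~\ref{assn:indep} because $(C_i(d),S_i(d))$ is a measurable function of $(\gamma_i,\beta_i,U_i,Y_{i0})$ for fixed $d$ and known $\delta_1$.
\item The open-ball support condition on $H_i\mid Y_{i0}$ is Assumption~\ref{assn:richSupp}\eqref{assn:richSupp:i}.
\item Moment determinacy and finiteness of all absolute moments of $(C_i(d),S_i(d))\mid Y_{i0}$ are Assumption~\ref{assn:richSupp}\eqref{assn:richSupp:ii}.
\end{itemize}
All of these hold for almost every $y_0$, so the conclusion will hold for almost every $y_0$, which suffices for identification of the conditional distribution.

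The argument underlying Masten's lemma, which I would sketch for completeness, runs as follows. By independence, for $h$ in the support,
\[
E[V_i(d)^k\mid H_i=h,Y_{i0}=y_0]=E[(C_i(d)+h'S_i(d))^k\mid Y_{i0}=y_0].
\]
This is a polynomial in $h$ whose coefficients are the mixed moments of $(C_i(d),S_i(d))$, with finiteness guaranteed by Assumption~\ref{assn:richSupp}\eqref{assn:richSupp:ii}. A polynomial known on an open ball is determined everywhere, so all mixed moments of order $k$ are identified for every $k$. Moment determinacy then pins down the joint law.

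The only delicate point, which I expect to be the main obstacle, is conditioning: the relevant objects are conditional distributions given $Y_{i0}$, and the equality above holds only for almost every $h$ in the support. I plan to handle this by working with a regular conditional distribution of $(V_i(d),H_i)$ given $Y_{i0}$. Conditional moments are then identified on a set of full measure within the open ball. Such a set is dense, and polynomial identities extend to the whole ball by continuity. Everything else is routine bookkeeping of which assumption delivers which hypothesis.
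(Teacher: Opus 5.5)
Your proposal is correct and takes the paper's route. The paper obtains Lemma~\ref{lm:one} by combining the conditional independence \eqref{eq:cond-indep}, Assumption~\ref{assn:richSupp}, and Lemma~2 of \citet{masten2018random}, with $(C_i(d),S_i(d))$ and $H_i$ in the roles of $(A,B)$ and $Z$; your polynomial-moment sketch and your handling of a.e.\ conditioning spell out what the paper leaves to the cited lemma.
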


\medskip

The {\it second} step is to recover the joint density of $(C_i(d),\gamma_i,\beta_i)$ given $Y_{i0}$ from that of $(C_i(d),S_i(d))$ using Jacobian transformation. Specifically, choose $j\in\{1,\ldots,J_w\}$ so that the $j$-th component of $\delta_1$, denoted by $\delta_{1,j}$, is nonzero, and define a Jacobian matrix:
\begin{equation} \label{defn:Jacobian}
J_i(d) \equiv
\begin{pmatrix}
    1 & 0 & 0 & 0 \\
    0 & d_2\beta_{i1} & (d_1+d_2\gamma_i)I & 0 \\
    0 & 0 & 0 & d_2I \\
    0 & d_2\delta_{1,j} & 0 & 0
\end{pmatrix}, \ \text{ where } I \text{ is a $J_x$-by-$J_x$ identity matrix.}
\end{equation}
Suppose for $d$ with $d_2\neq 0 $, the determinant of $J_i(d)$ is non-zero almost surely.
(Because the conditional distribution of $\gamma_i$ given $Y_{i0}$ is atomless under \Cref{assn:indep}, $\Pr\{d_1+d_2\gamma_i=0 \mid Y_{i0}\} = 0 $ for any $d_2\neq 0$ almost surely.)
Recover the joint density of $C_i(d),\gamma_i,\beta_i$ conditional on $Y_{i0}=y_0$ as:
\begin{align}\label{eq:iden_CRC}
    \begin{split}
    &f_{C_i(d),\gamma_i,\beta_{i1},\beta_{i2} \vert Y_{i0}=y_0}(c,\gamma,b_1,b_2) \\
     = & f_{C_i(d),S_{i1}(d),S_{i2}(d),S_{i3,j}(d)\vert Y_{i0}=y_0}(c,b_1(d_1+d_2\gamma),b_2d_2,d_2\gamma\delta_{1,j}) \times
          \vert \delta_{1,j}d^{J_x+1}_2(d_1 + d_2\gamma)^{J_x}\vert.
    \end{split}
\end{align}
This identifies the joint distribution of $(\gamma_i,\beta_i)$ given $Y_{i0}$.

In the {\it third} step, for any $a \equiv (a_1,a_2) \in \mathbb R^2$ and realization of random coefficients $\gamma$, define the map \(\bar d(a,\gamma) \equiv (a_1-a_2\gamma,\,a_2)\), with components \(\bar d_1 \equiv a_1-a_2\gamma\) and \(\bar d_2 \equiv a_2\). Then for any $b\equiv(b_1',b_2')'$ and initial condition $y_0$,
\begin{align}\label{eq:iden_tU}
    \begin{split}
         & C_i(\bar d) \mid \gamma_i = \gamma, \beta_i=b, Y_{i0} = y_0  \\
     \sim & \ (\bar d_1 + \bar d_2\gamma)U_{i1} + \bar d_2 U_{i2} + \psi^* \mid \gamma_i = \gamma, \beta_i=b, Y_{i0} = y_0  \\
     \sim & \ a_1U_{i1} + a_2 U_{i2} + \psi^* \mid \gamma_i = \gamma, \beta_i=b, Y_{i0} = y_0,
    \end{split}
\end{align}
where \(\psi^* \equiv  (\bar d_1 \gamma + \bar d_2\gamma^2) y_0 = \gamma a_1 y_0\) is a known constant.

{\it Lastly}, combine the three steps above and note for any non-zero $a_1,a_2\in\mathbb R$, the distribution of \(a_1U_{i1} + a_2U_{i2}\) conditional on \((\gamma_i,\beta_i,Y_{i0})\) is identified from that of \((\wt Y_{i1}, \wt Y_{i2}, H_i')\) given $Y_{i0}$.
Therefore, the characteristic function (and the distribution) of \(U_i=(U_{i1},U_{i2})'\) given $(\gamma_i,\beta_i,Y_{i0})$ is identified.\footnote{
    We show identification of the characteristic function over nonzero $(a_1,a_2)$. Recovery of the characteristic function at $a$ s.t. $a_1a_2=0$ follows from uniform continuity of characteristic functions.}
In particular, the intercept means $\mu_t(Y_{i0})=E(U_{it}\mid Y_{i0})$ and $E(\gamma_i^2\mid Y_{i0})$---left non-separable by the conditional-mean step---are recovered here as features of this joint law.
The next theorem formalizes the identification result.

\medskip

\begin{theorem}\label{theorem:two}
    Suppose Assumptions \ref{assn:seq-exo}, \ref{assn:mean-indep}, \ref{assn:rank}, \ref{assn:indep} and \ref{assn:richSupp} hold. Then $\delta_1,\delta_2$ and the joint distribution of \((U_i,\gamma_i,\beta_i)\) given \(Y_{i0}\) are identified.
\end{theorem}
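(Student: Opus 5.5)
The proof chains the four steps sketched before the statement. At each link we must check that the object recovered is a genuine conditional law, not just a moment or a marginal.

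\textbf{Step 0: the constants $\delta_1,\delta_2$.} These come from \Cref{pn:iden-rc-means}, which uses Assumptions \ref{assn:seq-exo}, \ref{assn:mean-indep} and \ref{assn:rank}. Consequently $\wt Y_{i1},\wt Y_{i2}$ are observable functions of the data. Fix $y_0$ in the support of $Y_{i0}$; all statements below hold for almost every such $y_0$.

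\textbf{Step 1: the law of $(C_i(d),S_i(d))$.} For each $d\in\mathbb R^2$, \eqref{eq:ty} writes $d_1\wt Y_{i1}+d_2\wt Y_{i2}$ as a linear random-coefficient regression on $H_i$, with random intercept $C_i(d)$ and slopes $S_i(d)$. By \eqref{eq:cond-indep} these are independent of $H_i$ given $Y_{i0}$. \Cref{lm:one}, which rests on the open-ball support and moment-determinacy conditions of \Cref{assn:richSupp}, then identifies the joint law of $(C_i(d),S_i(d))$ given $Y_{i0}=y_0$.

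\textbf{Step 2: the law of $(C_i(d),\gamma_i,\beta_i)$.} Restrict to $d$ with $d_2\neq0$ and pick $j$ with $\delta_{1,j}\neq0$. The map
\[
(c,\gamma,b_1,b_2)\mapsto\bigl(c,\;b_1(d_1+d_2\gamma),\;d_2b_2,\;d_2\gamma\delta_{1,j}\bigr)
\]
is a bijection off the null set $\{d_1+d_2\gamma=0\}$. That set is null because $\gamma_i$ is atomless given $Y_{i0}$ by \Cref{assn:indep}. Its inverse is explicit:
\[
\gamma=s_3/(d_2\delta_{1,j}),\qquad b_2=s_2/d_2,\qquad b_1=s_1/(d_1+d_2\gamma).
\]
The Jacobian determinant $|\delta_{1,j}d_2^{J_x+1}(d_1+d_2\gamma)^{J_x}|$ is nonzero almost everywhere. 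So the change-of-variables formula \eqref{eq:iden_CRC} identifies the density of $(C_i(d),\gamma_i,\beta_i)$ given $Y_{i0}$. Marginalizing over $c$ gives the law of $(\gamma_i,\beta_i)$ given $Y_{i0}$.

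The Lebesgue density of $(C_i(d),S_{i1},S_{i2},S_{i3,j})$ needed here exists: it is the image of the density of $(\gamma_i,\beta_i,U_i)$, assumed in \Cref{assn:indep}, under a map that is smooth and locally invertible in the relevant coordinates, since $C_i(d)$ is affine in $U_{i2}$ with slope $d_2\neq0$. Taking the ratio of the joint density to the marginal density of $(\gamma_i,\beta_i)$ then identifies the conditional law of $C_i(d)$ given $(\gamma_i,\beta_i,Y_{i0})=(\gamma,b,y_0)$, for every $d$ with $d_2\neq0$ and almost every $(\gamma,b)$.

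\textbf{Step 3 (the main obstacle): from these conditional laws to the law of $U_i$.} The difficulty is that the weight on $U_{i1}$ in $C_i(d)$ is $d_1+d_2\gamma_i$, which is random. A single $d$ therefore never delivers a fixed linear combination of $(U_{i1},U_{i2})$ unconditionally.

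The remedy is that Step 2 already conditions on $\gamma_i=\gamma$. For given $a=(a_1,a_2)$ with $a_2\neq0$, we choose $d=\bar d(a,\gamma)=(a_1-a_2\gamma,a_2)$, which depends on the conditioning value. Then, conditional on $(\gamma,b,y_0)$, we have $C_i(\bar d)=a_1U_{i1}+a_2U_{i2}+\gamma a_1y_0$, as in \eqref{eq:iden_tU}. Hence the characteristic function
\[
E\bigl[e^{\mathrm{i}(a_1U_{i1}+a_2U_{i2})}\mid\gamma_i=\gamma,\beta_i=b,Y_{i0}=y_0\bigr]
\]
is identified for all $a$ with $a_2\neq0$.

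A measurability point must be handled carefully. The identification in Step 2 holds for each fixed $d$ only off a $(\gamma,b)$-null set, and that null set may depend on $d$. To use $d=\bar d(a,\gamma)$, which varies with $\gamma$, we work with the joint density of $(C_i(d),\gamma_i,\beta_i)$ as a jointly measurable function of $(d,c,\gamma,b)$. This is available because the densities are explicit pushforwards of the primitive density. Fubini's theorem then lets us pass to the diagonal $d=\bar d(a,\gamma)$ for almost every $(a,\gamma,b)$.

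Finally, characteristic functions are continuous, so their values on $\{a_2\neq0\}$ (a dense set) extend to all of $\mathbb R^2$. By Cramér--Wold, the conditional law of $U_i$ given $(\gamma_i,\beta_i,Y_{i0})$ is identified. Combining it with the law of $(\gamma_i,\beta_i)$ given $Y_{i0}$ from Step 2 yields the joint distribution of $(U_i,\gamma_i,\beta_i)$ given $Y_{i0}$, which proves \Cref{theorem:two}.
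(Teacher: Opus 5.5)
Your proposal is correct and follows the same route as the paper. You take $\delta_1,\delta_2$ from \Cref{pn:iden-rc-means} and the law of $(C_i(d),S_i(d))$ from \Cref{lm:one}. You then apply the Jacobian change of variables \eqref{eq:iden_CRC}, choose $d=\bar d(a,\gamma)$ as in \eqref{eq:iden_tU}, and finish with continuity of characteristic functions. Your pushforward-density and Fubini remarks make explicit what the paper leaves implicit. The Fubini step is valid because $(a,\gamma)\mapsto(\bar d(a,\gamma),\gamma)$ is a unit-Jacobian bijection, so it is a reparametrization rather than a lower-dimensional diagonal.

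One small correction: at $a_1=0$ the conditioning value $\gamma$ is exactly where the Jacobian $|\delta_{1,j}\bar d_2^{J_x+1}(\bar d_1+\bar d_2\gamma)^{J_x}|=|\delta_{1,j}a_2^{J_x+1}a_1^{J_x}|$ vanishes. The construction therefore delivers the characteristic function on $\{a_1a_2\neq0\}$, as the paper's footnote states, rather than on all of $\{a_2\neq0\}$. This is harmless, since your continuity step only needs a dense set.
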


\medskip

\begin{remark}
    \Cref{theorem:two} uses \Cref{assn:seq-exo}--\ref{assn:rank} only to identify $\delta_1,\delta_2$ via \Cref{pn:iden-rc-means}.
    Given $\delta_1,\delta_2$, the distributional argument rests on \Cref{assn:indep}--\ref{assn:richSupp} alone.
    \Cref{assn:mean-indep} is implied by \Cref{assn:indep} (as long as the conditional means exist).
\end{remark}

The identification strategy above is constructive; we apply the analog principle to define closed-form estimators in the next section.

In Appendix~\ref{sec:T3}, we extend this identification strategy to any fixed $T\ge3$ even without variation in sequentially exogenous $W_{it}$.

\section{Closed-Form Estimators}\label{sec:est}

We propose multi-step estimators for $\delta_1,\delta_2$ and the joint density of $ (\gamma_i, \beta_i, U_i)$.
The identification results are conditional on the initial condition $Y_{i0}$; we fix it at a constant $y_0\neq 0$ throughout this section, and suppress it from the notation. As in Section \ref{sec:iden}, we focus on the case with $T=2$.

\subsection{Moments of random coefficients and \texorpdfstring{$\delta_t$}{deltat}}
\label{subsec:est-delta}

Let \(T=2\) and use lower-case letters to denote the realizations in the sample, e.g., $h_i \equiv (x_{i1}',x_{i2}',w_{i1}')'$.
Let $\wh w_{i2}$ denote a Nadaraya-Watson estimator for the mean of $W_{i2}$ conditional on $h_i$;
let $\widehat z_i \equiv (1,h_i',\wh w_{i2}')'$, and
\[\widehat \Psi \equiv \left(\sum\nolimits_i \wh z_i\wh z_i'\right)^{-1}
\left(\sum\nolimits_i\wh z_i\Delta y_{i2}\right),\]
where the $k$-th component $\widehat \psi_k$ is an estimator of $\psi_k$ in (\ref{defn:Psi}), with the initial condition $y_0$ fixed and suppressed in notation.
Let $r_i \equiv (1,x_{i1}',w_{i1}')'$ and estimate the period-1 level regression \eqref{eq:levelreg} by
\begin{equation} \label{eq:est_condY1}
    \wh\Phi \equiv \left(\sum\nolimits_i r_ir_i'\right)^{-1}\left(\sum\nolimits_i r_i\, y_{i1}\right) = (\wh \pi_1,\ \wh E(\beta_{i1})',\ \wh\delta_1')'.
\end{equation}
Together with $\wh\Psi$ from the regression on $\Delta Y_{i2}$, we also estimate
\begin{align} \label{est_condMean}
   \wh E(\beta_{i2}) \equiv & \ \wh\psi_3, \quad \wh\delta_2 \equiv \wh\psi_5, \quad
   \wh E(\gamma_i) \equiv 1 + \wh\delta_1'\wh\psi_4/\|\wh\delta_1\|^2.
\end{align}
We can also estimate the following parameters (even though they are not needed for subsequent estimation of the joint density of $(\gamma_i,\beta_i',U_i')'$):
\begin{align*}
   & \wh\mu_1 \equiv \wh \pi_1 - \wh E(\gamma_i)\,y_0, \quad
     \wh E(\gamma_i^2) \equiv \int \gamma^2\,\wh f_{\gamma_i}(\gamma)\,d\gamma, \quad \wh E(\gamma_i\beta_{i1})\equiv \wh\psi_2+\wh E(\beta_{i1}), \nonumber \\
   & \wh\mu_2 \equiv \wh\psi_1 - \big(\wh E(\gamma_i^2)-\wh E(\gamma_i)\big)y_0 - \wh E(\gamma_i)\wh\mu_1 + \wh\mu_1, \nonumber
\end{align*}
where $\wh f_{\gamma_i}(\gamma) \equiv \int \wh f_{\gamma_i,\beta_i}(\gamma,b)\,db$ uses the density estimator of Section \ref{subsec:est_RC}.

With $\delta_1\neq 0$, the probability limit of the denominator $\|\wh\delta_1\|^2$ in $\wh E(\gamma_i)$ is strictly bounded away from zero.
Consequently, a standard argument using Slutsky's Theorem ensures that division by $\|\wh\delta_1\|^2$ does not affect the $\sqrt{n}$-consistency and asymptotic normality of the estimators in \eqref{eq:est_condY1} and \eqref{est_condMean}.

\subsection{The joint density of random coefficients} \label{subsec:est_RC}

\noindent
For the rest of this section and Section~\ref{sec:asymp}, assume $W_{i1}\in\mathbb R$ for simplicity.
Fix $d=(0,1)$, so that (\ref{eq:ty}) reduces to
\begin{equation*}
    \wt Y_{i2} = \underbrace{\gamma_i^2 y_0 + \gamma_iU_{i1}+U_{i2}}_{C_i^*}+ X'_{i1}\underbrace{\beta_{i1}\gamma_i}_{S_{i1}^*}+X'_{i2}\underbrace{\beta_{i2}}_{S_{i2}^*}+W'_{i1}\underbrace{\delta_1\gamma_i}_{S_{i3}^*}.
\end{equation*}
Following \cite{hoderlein2010analyzing}, we define an (inverse) Radon transform estimator (RTE) for the density of $(C_i^*,S_i^*)$.

Recall $ H_i \equiv (X_{i1}',X_{i2}',W_{i1}')' $. For $i=1,\dots,n$, define
\begin{equation*}
    Q_i \equiv \| (1, H_i')\|^{-1}(1, H_i')' \in \mathbb{S}_+^{l-1} \ \text{ and } \
    V_i \equiv \|(1, H_i')\|^{-1} \left(Y_{i2} - W_{i2}'\wh\delta_2\right) \in \mathbb{R},
\end{equation*}
where $\|\cdot\|$ denotes the Euclidean norm, and $\mathbb{S}_+^{l-1} = \{z \in \mathbb{R}^l: z_1>0, \|z\|=1 \}$ is the upper hemisphere of the unit sphere in $\mathbb{R}^l$ with $l\equiv\dim[(1, H_i')]=2J_x+2$.

Define \(
\mathbb{S}(\underline q_n)\equiv\{q\in\mathbb{S}_+^{l-1}: q_1\ge \underline q_n\}\), where $\underline q_n\to 0$ as $n\to\infty$.
Estimate the density of $(C_i^*,S_i^*)$ by
   \begin{equation} \label{eq:est_cstar}
       \wh f_{C_i^*,S_i^*}(c,s) \equiv \frac{2}{n} \sum\nolimits_i \frac{\mathbf 1\{q_{i1}\ge \underline q_n\}}{\wh f_{Q_i}(q_i)}K_\nu(q_i'(c,s')'-v_i),
    \end{equation}
where $\mathbf 1(\cdot)$ is the indicator function, and $K_\nu$ is the kernel defined by Equations (7)--(8) of \cite{bissantz2014confidence}.\footnote{\label{note:defn_K}
    The kernel is defined through its Fourier transform $\mathcal F K_\nu(t)=\frac{1}{2}(2\pi)^{-l+1}|t|^{l-1}\mathcal L(\nu|t|)$, which by symmetry of $\mathcal L$ admits an explicit form:
    \(K_\nu(w)\equiv (2\pi)^{-l}\int^\infty_0\cos(tw)\,t^{l-1}\mathcal L(\nu t)\,dt\).
    The symmetric filter is \(\mathcal L(t)\equiv(1-|t|^r)\mathbf{1}_{[-1,1]}(t)\), with an order parameter $0<r<\infty$.
    The kernel $K_\nu$ can be written, via a change of variable $\tilde t=\nu t$, as \(K_\nu(w)=\nu^{-l}(2\pi)^{-l}\!\int_0^\infty\!\cos\!\big(\tilde t\,\tfrac{w}{\nu}\big)\tilde t^{\,l-1}\mathcal L(\tilde t)\,d\tilde t=\nu^{-l}K(\tilde w)\big|_{\tilde w=w/\nu}\), where \( K(\tilde w)\equiv (2\pi)^{-l}\!\int_0^\infty\!\cos(t\tilde w)\,t^{l-1}\mathcal L(t)\,dt \). }
The denominator is a kernel estimator of the density of $Q_i$ on its spherical support:
\begin{equation*}
        \wh f_{Q_{i}}(\tilde q) \equiv \frac{1}{n}\sum\nolimits_i \varkappa(\tau)\wt K\left(\tau^{-2}(1-\tilde q'q_i)\right),
    \end{equation*}
where $\wt K(\cdot)$ is a kernel function, $\tau > 0$ is a smoothing parameter, and $\varkappa(\tau)$ is a normalization constant defined in \cite{hoderlein2010analyzing}.

Estimate the density of $(C_i^*,\gamma_i,\beta_i)$ by
    \[\wh f_{C_i^*,\gamma_i,\beta_i}(c,\gamma,b) \equiv \wh f_{C_i^*,S_i^*}(c, b_1\gamma,b_2,\wh\delta_1\gamma) \vert \wh\delta_1\gamma^{J_x}\vert,\]
and estimate the density of $(\gamma_i,\beta_i)$ by
\[\wh f_{\gamma_i,\beta_i}(\gamma,b) \equiv \int_{\mathcal C_n} \wh f_{C_i^*,\gamma_i,\beta_i}(c,\gamma,b)\, d c ,\]
where $\mathcal C_n\equiv[-\bar c_n,\bar c_n] \subset \mathbb{R}$, with $\bar c_n\to\infty$ as $n\to\infty$.

\subsection{The joint density of \texorpdfstring{$U_i$}{Ui} conditional on random coefficients} \label{subsec:est_U}

For any $a = (a_1,a_2) \in \mathbb{R}^2$ and any given $\gamma$ on the support of $\gamma_i$, define $\bar d \equiv (a_1-a_2\gamma, a_2)$.
With slight abuse of notation, write $C_i(\bar d) = C_i(a,\gamma)$ and $S_i(\bar d) = S_i(a,\gamma)$. Define:
\[V_i(a,\gamma) \equiv \|(1, H_i')\|^{-1} [\ (a_1-\gamma a_2)\wh Y_{i1} + a_2 \wh Y_{i2}\ ], \]
where $\wh Y_{it} \equiv Y_{it} - W_{it}'\wh\delta_t$ for $t=1,2$.
Estimate the density of $(C_i(a,\gamma), S_i(a,\gamma))$ by replacing $V_i$ with $V_i(a,\gamma)$ in (\ref{eq:est_cstar}), and denote that estimator by $\wh f_{C_i(a,\gamma), S_i(a,\gamma)}$.
For simplicity, we suppress the arguments $(a,\gamma)$ for the rest of this subsection.

Estimate the density of $(C_i, \gamma_i, \beta_i)$ at $(c,\gamma,b)$ by
\begin{align*}
  \wh f_{C_i, \gamma_i, \beta_{i}} (c,\gamma,b) \equiv \wh f_{C_i, S_{i1}, S_{i2}, S_{i3}}(c,b_1a_1,b_2a_2, \wh\delta_1a_2\gamma)|\wh\delta_1a_1^{J_x}a_2^{J_x+1}|,
\end{align*}
and estimate the conditional density of $C_i$ given $\gamma_i = \gamma, \beta_i=b$ as
\begin{align*}
    \wh f_{C_i\mid \gamma_i = \gamma, \beta_i=b}(c) = \frac{\wh f_{C_i, \gamma_i, \beta_{i}} (c,\gamma,b)}{\wh f_{\gamma_i,\beta_i}(\gamma,b)}.
\end{align*}

For any $a\in\mathbb R^2$, let $\wt U_i(a) \equiv C_i - \psi^* = a_1U_{i1}+a_2U_{i2}$, where $\psi^*=a_1\gamma y_0$ as defined in (\ref{eq:iden_tU}).
Estimate the conditional density of $\wt U_i(a)$ by
\begin{align*}
    \wh f_{\wt U_i(a) \mid \gamma_i = \gamma, \beta_i=b}(\tilde u)
    = \wh f_{C_i \mid \gamma_i = \gamma, \beta_i=b}(\tilde u+a_1\gamma y_0).
\end{align*}

Estimate the joint density of $U_i=(U_{i1},U_{i2})'$ conditional on $\gamma_i = \gamma, \beta_i=b$ following steps similar to those in \cite{masten2018random}.
First, estimate the conditional characteristic function of $U_i$ by
\begin{equation*}
\wh \phi_{U_{i1},U_{i2}\mid \gamma_i = \gamma, \beta_i=b}(a_1,a_2)=\int_{\mathcal U_n} \exp(i\tilde u)\, \wh f_{\wt U_i(a) \mid \gamma_i = \gamma, \beta_i=b}(\tilde u)\, d\tilde u,
\end{equation*}
where $\mathcal U_n$ is the interval implied by $\bar c_n$, which expands as $n\to\infty$; see the proof of \Cref{theorem:u} in Appendix~\ref{sec:AppendixA}.

Then, use the inverse Fourier transform to estimate the conditional density of $U_i$ as
\begin{align*}
&\wh f_{U_{i1},U_{i2}\mid \gamma_i= \gamma, \beta_i=b}(u_1,u_2) \\
= &\mathrm{Re}\!\left(\frac{1}{(2\pi)^2} \int_{\mathcal{A}_n}\!\exp[-i(a_1u_1+a_2u_2)]\wh \phi_{U_{i1},U_{i2}\mid \gamma_i = \gamma, \beta_i=b}(a_1,a_2)da_1da_2\right),
\end{align*}
where $\mathrm{Re}(\cdot)$ is the real part of a complex number and
\[\mathcal{A}_n\equiv\{a\in\mathbb R^2:\ \underline{\vartheta}_n\le|a_1|\le \bar{\vartheta}_n,\ \underline{\vartheta}_n\le|a_2|\le \bar{\vartheta}_n\},\]
with tuning sequences $\bar{\vartheta}_n\to\infty$ and $\underline{\vartheta}_n\to0$ as $n\to\infty$.
Truncating the Fourier inversion to a bounded region is standard in deconvolution \citep{fan1991optimal,meister2009deconvolution}; \citet{masten2018random} applied the same technique in estimation.

\section{Asymptotic Properties} \label{sec:asymp}

Define a Sobolev space \[\mathcal{W}^m(\mathbb{R}^l) \equiv \{f \in L^2(\mathbb{R}^l): (1+\|\cdot \|^2)^{m/2}\mathcal{F}f(\cdot) \in L^2(\mathbb{R}^l)\}\] with a semi-norm
\[\|f\|^2_m \equiv \int_{\mathbb R^l}(1+\|\xi\|^2)^m | \mathcal F f(\xi)|^2d\xi,\]
where $\mathcal{F}f(\xi)\equiv \int_{\mathbb R^l}f(x) \exp(-ix'\xi)dx$ is the Fourier transform of $f(\cdot)$, and $m>0$ is the {\it order of smoothness} of the Sobolev space.
For any constant $B>0$, define
\[\mathcal{W}^m(\mathbb{R}^l; B) \equiv \{ f \in \mathcal{W}^m(\mathbb{R}^l): \| f\|_m \leq B\}.\]
We treat $\delta_t$ as known.
The first-step estimators $\wh \delta_1,\wh \delta_2$ in Section \ref{subsec:est-delta} are $\sqrt n$-consistent (with $Y_{i0}$ fixed at a constant $y_0$ in the sample).
With $m > l/2$, the uniform convergence rate of the intermediate density estimator is strictly slower than $n^{-1/2}$ (\Cref{lemma:rte}). The plug-in error from replacing $\delta_t$ by $\wh\delta_t$ in $\widetilde Y_{it}$ is
asymptotically negligible relative to the inverse Radon transform estimation.

For any $d = (d_1,d_2)\in\mathbb R^2$, define a Radon transform estimator \(\wh f_{C_i(d),S_i(d)}\) for the joint density of $(C_i(d),S_i(d))$ by replacing $v_i$ in (\ref{eq:est_cstar}) with
$v_i(d) \equiv \| (1,h_i')\|^{-1} (d_1 \tilde y_{i1} + d_2 \tilde y_{i2})$, where $\tilde y_{it} \equiv y_{it} - w_{it}'\delta_t$.

We establish uniform convergence of our estimators for the joint density of
$(\gamma_i,\beta_i)$ and the conditional density of $U_i$ given $(\gamma_i,\beta_i)$.
Recall these are all conditional on $Y_{i0}$, which is fixed at a constant value $y_0$ in the sample, and suppressed in notation.

We maintain the conditions for identification in \Cref{theorem:two} throughout, and do not reiterate them below for brevity.
In addition, we maintain the following assumptions.

\medskip

\begin{assumption}
    \label{assn:rc}
    Let $\mathcal T$ be a compact subset of the joint support of $(\gamma_i,\beta_i)$ such that
\begin{enumerate}[(i)]
    \item \label{assn:rc:jac} the Jacobian determinant $|J_i(d)|$ in \eqref{defn:Jacobian} is bounded away from zero for any fixed $d \in \mathbb R^2$ with $d_2\neq0$.
    \item \label{assn:rc:bound} $\sup_{(\gamma,b)\in\mathcal T}f_{\gamma_i,\beta_i}(\gamma,b)<\infty$.
    \item \label{assn:rc:positive}
    there exists a constant $M_b>0$ such that $\inf_{(\gamma,b)\in\mathcal T} f_{\gamma_i,\beta_i}(\gamma,b)\ge M_b$.
\end{enumerate}
\end{assumption}

\medskip

For a fixed $d$, the Jacobian $|J_i(d)|=|\delta_{1}d_2^{J_x+1}(d_1+d_2\gamma)^{J_x}|$ vanishes at $\gamma=-d_1/d_2$.
\Cref{assn:rc}\eqref{assn:rc:jac} excludes this value from the compact set $\mathcal T$.
\Cref{assn:rc}\eqref{assn:rc:positive} bounds $f_{\gamma_i,\beta_i}$ away from zero on $\mathcal T$, so that $\wh f_{\gamma_i,\beta_i}$ stays away from zero asymptotically.

\medskip

\begin{assumption}
\label{assn:f_cs}
    For any $d \in \mathbb R^2$ with $d_2\neq0$, the joint density $f_{C_i(d),S_i(d)}$ satisfies
\begin{enumerate}[(i)]
    \item \label{assn:f_cs:smooth} $f_{C_i(d),S_i(d)} \in \mathcal W^m(\mathbb R^l; B)$ for some $B > 0$ and smoothness order $m > l/2$, and
    \item \label{assn:f_cs:tail} there exist constants $M_c>0$ and $\kappa>l-1$ such that $f_{C_i(d),S_i(d)}(z)\le M_c\,(1+\|z\|)^{-\kappa}$ for all $z\in\mathbb R^l$.
\end{enumerate}
\end{assumption}

\medskip

The Sobolev smoothness in \Cref{assn:f_cs}\eqref{assn:f_cs:smooth} controls the bias of the Radon transform
estimator, and $m>l/2$ ensures that $f_{C_i(d),S_i(d)}$ is bounded and continuous.
This coincides with the condition of \cite{bissantz2014confidence} for the uniform bound on the bias.
The decay exponent $\kappa>l-1$
controls the variance of the Radon transform estimator.
\citet{holzmann2020} impose a similar polynomial tail bound that determines the optimal rate of their Fourier-based estimator.
\citet{goldenshluger2021deconvolution} impose a comparable tail bound on the target density.

\medskip

\begin{assumption}
\label{assn:f_Q}\leavevmode
\begin{enumerate}[(i)]
    \item \label{assn:f_Q:bound} The density $f_{Q_i}$ is uniformly bounded.
    \item \label{assn:f_Q:smooth} For some even integer $\zeta>(l-1)^2/(2m-l+1-2\varsigma)$ with some constant $\varsigma\in(0,m-l/2)$, the partial derivatives of $f_{Q_i}$ up to order $\zeta$ exist and are bounded.
    \item \label{assn:f_Q:positive} There exist constants $M_q>0$ and $k >1$ such that $f_{Q_i}(q)\ge M_q\,q_1^{\,k}$ for any $q\in\mathbb{S}_+^{l-1}$.
    \item \label{assn:f_Q:bw} The smoothing parameter of $\wh f_{Q_i}$ satisfies $\tau\asymp(\ln n/n)^{1/(2\zeta+l-1)}$.
  \item \label{assn:f_Q:sphK} The spherical kernel $\wt K$ is bounded, H\"older continuous, and of order at least $\zeta$.
\end{enumerate}
\end{assumption}

\medskip

\Cref{assn:f_Q}\eqref{assn:f_Q:bound} corresponds to the upper boundedness condition in Assumption~3 of \citet{hoderlein2010analyzing}; \eqref{assn:f_Q:positive} adapts the lower bound in their Assumption~4 to the trimmed setting, which yields \(M_{q,n}\equiv\inf_{q\in\mathbb S(\underline q_n)}f_{Q_i}(q)\ge M_q\underline q_n^k\).
The smoothness condition~\eqref{assn:f_Q:smooth}, together with the spherical kernel condition~\eqref{assn:f_Q:sphK}, gives the uniform rate of the spherical kernel estimator $\wh f_{Q_i}$
under \(\tau\asymp(\ln n/n)^{1/(2\zeta+l-1)}\); see \citet[Appendix~B]{hoderlein2010analyzing}.

The next assumption is on the kernel function $K_\nu(\cdot)=\nu^{-l}K(\cdot/\nu)$ in Footnote~\ref{note:defn_K}.

\medskip

\begin{assumption}
\label{assn:K_nu}\leavevmode
       \begin{enumerate}[(i)]
     \item \label{assn:K_nu:bound} The kernel $K$ is uniformly bounded.
           \item \label{assn:K_nu:lip} There exists a constant $M_k > 0$ such that, for all $z,z' \in \mathbb{R}$,
               $|K(z)- K(z')| \leq M_k\|z- z'\|$.
\item \label{assn:K_nu:bw} The smoothing parameter $\nu$ is of the order $\nu\asymp(\underline q_n^{1-k}\ln n/n)^{1/(2m+l-1-2\varsigma)}$.
                \item \label{assn:K_nu:order} The order parameter $r$ of the kernel $K_\nu$ satisfies $r\ge m-l/2$.
\end{enumerate}
\end{assumption}

\medskip

\Cref{assn:K_nu}\eqref{assn:K_nu:bound} and \eqref{assn:K_nu:lip} require the kernel $K$ to be bounded and Lipschitz continuous. They are standard for uniform convergence of kernel density estimators, as in \cite{li2007nonparametric} and \cite{masten2018random}.
\Cref{assn:K_nu}\eqref{assn:K_nu:bw} and \eqref{assn:K_nu:order} specify the rate of the bandwidth \(\nu\) and the order of the kernel.
Following Lemma~1 of \cite{bissantz2014confidence},
the order $r\ge m-l/2$ ensures the kernel does not limit the bias rate of the intermediate Radon transform estimator $f_{C_i(d),S_i(d)}$, which is then determined by the smoothness $m$.
\medskip

\begin{assumption}
\label{assn:f_u}
    For any $(\gamma,b)$ on the joint support of $(\gamma_i,\beta_i)$,
    \begin{enumerate}[(i)]
    \item \label{assn:f_u:tail} there exists $p>1$ such that $\sup_{(\gamma,b)\in\mathcal T} E(|U_{ij}|^{p}\mid\gamma_i=\gamma,\beta_i=b)<\infty$ for $j=1,2$.
    \item \label{assn:f_u:bound} $\max \{\int\sup_{u_2 \in \mathbb R}f_{U_i\mid\gamma_i=\gamma,\beta_i=b}(u_1,u_2)\,du_1, \int\sup_{u_1\in \mathbb R}f_{U_{i}\mid\gamma_i=\gamma,\beta_i=b}(u_1,u_2)\,du_2\}<\infty$.
    \item \label{assn:f_u:smooth} The conditional density $f_{U_{i}\mid\gamma_i=\gamma,\beta_i=b}$ is continuously differentiable, and both
    \(\int\sup_{u_1} (1+|u_2|)\,|\partial_{u_1} f_{U_{i}\mid\gamma,b}(u_1,u_2)|\,du_2\) and \(\int\sup_{u_2}(1+|u_1|)\,|\partial_{u_2} f_{U_{i}\mid\gamma,b}(u_1,u_2)|\,du_1\) are finite.
    \item \label{assn:f_u:char} There exist constants $M_{\phi}>0$ and $\rho>2$ such that $|\phi_{U_{i}\mid\gamma_i=\gamma,\beta_i=b}(a)|\le M_{\phi}(1+\|a\|)^{-\rho}$ for all $a\in\mathbb R^2$, where $\phi_{U_i\vert \gamma_i,\beta_i}(\cdot)$ denotes the characteristic function.
\end{enumerate}
\end{assumption}

\medskip

\Cref{assn:f_u}\eqref{assn:f_u:tail} and~\eqref{assn:f_u:char} bound the conditional moments of the random intercepts and the tail of their characteristic function.
Conditions (\ref{assn:f_u:tail}), (\ref{assn:f_u:bound}), and (\ref{assn:f_u:char}) are instrumental for dealing with the trimming devices $\mathcal C_n $, $\mathcal U_n$, and $\mathcal A_n $ in asymptotics.
The smoothness condition~(\ref{assn:f_u:smooth}) adapts Assumption~E1.3 of \citet{masten2018random} to non-compact support, yielding a bound on $\|\nabla_a f_{\wt U_i(a)\mid\gamma,b}\|$ that is analogous to his Lemma~S3 under compact support.

\Cref{assn:f_u}\eqref{assn:f_u:char} is the ordinary-smooth condition of the deconvolution literature \citep{fan1991optimal,meister2009deconvolution}. Since $(1+\|a\|)^{-\rho}$ is integrable on $\mathbb R^2$ for $\rho>2$, the conditional characteristic function is absolutely integrable, which is Assumption~E1.4 of \citet{masten2018random} for the Fourier inversion. The decay helps to account for the bias from truncating the inversion on $\mathcal A_n$.

Define $r_{1n}\equiv(\underline q_n^{1-k}\ln n/n)^{\frac{m-l/2-\varsigma}{2m+l-1-2\varsigma}}$, $r_{2n}\equiv(\ln n/n)^{\frac{\zeta}{2\zeta+l-1}}$, and let $$r_n\equiv r_{1n}+\underline q_n^{1/2} +\underline q_n^{1-k}r_{2n}r_{1n}^{-\frac{2(l-1)}{2m-l-2\varsigma}}.$$
\begin{assumption}\label{assn:tuning}
$\underline q_n\to0$,
$\bar c_n \asymp r_n^{-1/(p+1)}$, and
$$ \qquad \underline q_n^{-1}(\ln n/n)^{\min\left\{\frac{2m+l-2-2\varsigma}{k(4m+2l-3-4\varsigma)+1},\ \frac{\zeta}{k(2\zeta+l-1)}, \ \frac{\zeta(2m-l+1-2\varsigma)-(l-1)^2}{2(k-1)(m-\varsigma)(2\zeta+l-1)}\right\}}\to 0.$$
\end{assumption}

The sequence $r_n$ characterizes the uniform convergence rate of an intermediate density estimator $\widehat f_{C_i(d),S_i(d)}$ for a fixed index $d$ (\Cref{lemma:rte}).
\Cref{assn:tuning} restricts the trimming sequence $\underline q_n$ and the truncation sequence $\bar c_n$, and ensures $r_n \to 0$.

The next proposition establishes uniform convergence of our estimator for the density of $(\gamma_i,\beta_i)$.
Recall that $l = 2J_x+2$ denotes the dimension of $(1,H_i')'$ (with $W_{i1}\in\mathbb R$).

\medskip

\begin{proposition}\label{prop:margin}
Fix $d\in\mathbb R^2$ with $d_2\neq 0$, and suppose Assumptions \ref{assn:rc}\eqref{assn:rc:jac}--\eqref{assn:rc:bound}, \ref{assn:f_cs}, \ref{assn:f_Q}, \ref{assn:K_nu}, \ref{assn:f_u}\eqref{assn:f_u:tail} and~\ref{assn:tuning} hold.
Then, for any moment order $p$ in \Cref{assn:f_u}\eqref{assn:f_u:tail},
$$\sup_{(\gamma,b) \in \mathcal T} \left|\wh f_{\gamma_i,\beta_i}(\gamma,b) - f_{\gamma_i,\beta_i}(\gamma,b)\right| = O_p\!\left(r_n^{p/(p+1)}\right).$$
\end{proposition}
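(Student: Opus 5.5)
The plan is to split the error into a truncation (bias) term and an estimation term, and to balance them through the choice $\bar c_n \asymp r_n^{-1/(p+1)}$ in \Cref{assn:tuning}. Fix $d$ with $d_2\neq0$. Let $g(c,\gamma,b)$ denote the change-of-variables map in \eqref{eq:iden_CRC}, so that $f_{C_i(d),\gamma_i,\beta_i}(c,\gamma,b)=f_{C_i(d),S_i(d)}(g(c,\gamma,b))\,|J(\gamma)|$, where $|J(\gamma)|=|\delta_{1}d_2^{J_x+1}(d_1+d_2\gamma)^{J_x}|$. For $(\gamma,b)\in\mathcal T$ we write
\begin{align*}
\wh f_{\gamma_i,\beta_i}(\gamma,b)-f_{\gamma_i,\beta_i}(\gamma,b)
={}& \int_{\mathcal C_n}\big[\wh f_{C_i(d),S_i(d)}-f_{C_i(d),S_i(d)}\big](g(c,\gamma,b))\,|J(\gamma)|\,dc \\
& -\int_{\mathbb R\setminus\mathcal C_n} f_{C_i(d),\gamma_i,\beta_i}(c,\gamma,b)\,dc ,
\end{align*}
treating $\delta_1,\delta_2$ as known, as the section stipulates. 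The Jacobian factor $|J(\gamma)|$ is bounded above uniformly on the compact set $\mathcal T$. \Cref{assn:rc}\eqref{assn:rc:jac} ensures the change of variables is well defined there.

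\textbf{First term.} We invoke the uniform rate of the intermediate Radon transform estimator (\Cref{lemma:rte}), namely $\sup_{z}|\wh f_{C_i(d),S_i(d)}(z)-f_{C_i(d),S_i(d)}(z)|=O_p(r_n)$. This lemma is the one the paper refers to as delivering $r_n$. It combines the bias bound from \Cref{assn:f_cs}\eqref{assn:f_cs:smooth} and \Cref{assn:K_nu}\eqref{assn:K_nu:order}, the trimming error $\underline q_n^{1/2}$, the stochastic term under the bandwidth in \Cref{assn:K_nu}\eqref{assn:K_nu:bw}, and the plug-in error from $\wh f_{Q_i}$ under \Cref{assn:f_Q}. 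Integrating over an interval of length $2\bar c_n$ then bounds the first term by $O_p(\bar c_n r_n)$, uniformly over $\mathcal T$.

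\textbf{Second term.} We use
\[
\int_{|c|>\bar c_n} f_{C_i(d),\gamma_i,\beta_i}(c,\gamma,b)\,dc
= f_{\gamma_i,\beta_i}(\gamma,b)\,\Pr\big(|C_i(d)|>\bar c_n\mid\gamma,b\big).
\]
Conditional on $(\gamma_i,\beta_i)=(\gamma,b)$, the variable $C_i(d)$ equals the known constant $(d_1\gamma+d_2\gamma^2)y_0$ plus $(d_1+d_2\gamma)U_{i1}+d_2U_{i2}$. On the compact set $\mathcal T$ both the constant and the coefficients are bounded, so \Cref{assn:f_u}\eqref{assn:f_u:tail} gives $\sup_{\mathcal T}E(|C_i(d)|^p\mid\gamma,b)<\infty$. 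Markov's inequality, together with \Cref{assn:rc}\eqref{assn:rc:bound}, then bounds the tail term by $O(\bar c_n^{-p})$ uniformly.

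\textbf{Combining.} The total error is $O_p(\bar c_n r_n+\bar c_n^{-p})$. With $\bar c_n\asymp r_n^{-1/(p+1)}$ both pieces equal $r_n^{p/(p+1)}$, which gives the claimed rate.

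\textbf{Main obstacle.} The delicate step is the uniform integration of the first term. \Cref{lemma:rte} is a sup-norm statement over all of $\mathbb R^l$, so the integral bound $\bar c_n r_n$ follows directly, without needing integrability of the error over $c$. If instead the available version of \Cref{lemma:rte} is only uniform over compacts, I would rely on the polynomial tail bound in \Cref{assn:f_cs}\eqref{assn:f_cs:tail} with $\kappa>l-1$. That bound controls both the estimator's variance and its bias on the growing set $g(\mathcal C_n\times\mathcal T)$, so the sup-rate extends to that set, and this is the first thing I would check.
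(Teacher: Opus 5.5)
Your architecture is the paper's. You use the same split into an estimation term on $\mathcal C_n$ and a tail term, and the same Markov/$c_r$ bound $O(\bar c_n^{-p})$ from \Cref{assn:f_u}\eqref{assn:f_u:tail} and the boundedness of $f_{\gamma_i,\beta_i}$ on $\mathcal T$. You also use the same balancing $\bar c_n\asymp r_n^{-1/(p+1)}$. The gap is in the step you flagged yourself. \Cref{lemma:rte} is not a sup-norm statement over $\mathbb R^l$; it is stated over a fixed compact $\mathcal S$. The set where you need the rate, $\mathcal S_n\equiv\mathcal C_n\times\{\tilde s:(\gamma,b)\in\mathcal T\}$, expands with $n$, so the lemma does not apply as stated. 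Your primary argument, that "the integral bound $\bar c_n r_n$ follows directly," therefore rests on a misreading.

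Your fallback does not close the gap, because it uses the wrong ingredient. The tail bound in \Cref{assn:f_cs}\eqref{assn:f_cs:tail} with $\kappa>l-1$ is already used inside \Cref{lemma:rte}: it keeps the Radon transform of $f_{C_i(d),S_i(d)}$ bounded, which bounds the variance of each summand. It gives no control of the supremum of the centered process over an expanding set. It is also not needed for the bias, since Lemma S.1 already bounds $I_{3,1}$ and $I_{3,2}$ over all of $\mathbb R^l$ using Sobolev smoothness and the kernel order.

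What you are missing is a covering-number argument. In the stochastic bounds (Lemmas S.2 and S.3), the evaluation set enters only through the number of cubes $N_n\asymp|\mathcal S_n|\,\ell_n^{-l}$ in the Bernstein, union-bound and Borel--Cantelli step. \Cref{assn:tuning} forces $\underline q_n$ to vanish no faster than a power of $n$, and $r_n\ge\underline q_n^{1/2}$. Hence $\bar c_n\asymp r_n^{-1/(p+1)}=O(n^{C})$ for some finite $C$, so $N_n$ grows only polynomially and $\ln N_n=O(\ln n)$. The rate $r_n$ then survives on $\mathcal S_n$. With this step supplied you get $T_1=O_p(\bar c_n r_n)$, and the rest of your proof goes through unchanged.
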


\begin{remark}
The adjustment factor $p/(p+1)$ is due to the accommodation of non-compact support of random intercepts $U_i$. The convergence rate is $r_n$ under compact support, or arbitrarily close to $r_n$ if the conditional moments are finite for all orders $p> 1$ (e.g., a Gaussian, a Gaussian mixture, or a logistic distribution).
\end{remark}

\begin{remark}
\citet{hoderlein2010analyzing} estimate the joint density of random coefficients in a cross-sectional linear model using a Radon transform estimator, including a trimmed version for models with an intercept.
In our setting, the Radon transform estimator is applied to the intermediate reduced-form density \(f_{C_i(d),S_i(d)}\).
\Cref{prop:margin} provides a sup-norm rate required for the conditional density of $U_i$ in \Cref{theorem:u}.
\end{remark}

We next establish uniform convergence of our estimator for the
joint density of $U_i \equiv (U_{i1},U_{i2})'$ conditional on
$(\gamma_i, \beta_i)$, which we recover by Fourier inversion of
the characteristic function estimator
$\wh \phi_{U_{i1},U_{i2} \mid \gamma_i,\beta_i}$.
The result is obtained under Assumptions S.1 and S.2 in Online Appendix S2.
Assumption S.1 imposes tail and smoothness bounds on the density of $(C_i(\bar d),S_i(\bar d))$ uniformly over $a \in \mathcal A_n$ almost everywhere. 
Assumptions S.2 imposes restrictions on the tuning sequences $\bar \vartheta_n$ and $\underline \vartheta_n$.

\medskip

\begin{theorem}\label{theorem:u}
Suppose Assumptions~\ref{assn:rc}, \ref{assn:f_cs}, \ref{assn:f_Q}, \ref{assn:K_nu}, \ref{assn:f_u}, \ref{assn:tuning}, S.1, S.2 hold, 
and that $E(|\wt Y_{i1}|)<\infty$ and $E(|\wt Y_{i2}|)<\infty$. Then, for any $(\gamma,b)$ on the support of $(\gamma_i,\beta_i)$,
    \begin{align*}
        \sup_{(u_1,u_2)\in \mathbb{R}^2}\left|\wh f_{U_{i1},U_{i2}|\gamma_i = \gamma, \beta_i=b}(u_1,u_2) -  f_{U_{i1},U_{i2}|\gamma_i = \gamma, \beta_i=b}(u_1,u_2)\right| = o_p(1).
    \end{align*}
\end{theorem}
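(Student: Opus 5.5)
The argument follows the chain of estimators in Section~\ref{subsec:est_U} backward, controlling the error introduced at each link. For fixed $(\gamma,b)$ we decompose the sup-error of $\wh f_{U_{i1},U_{i2}\mid\gamma,b}$ into three terms, obtained by inserting the infeasible object
\[
\bar f(u)\equiv\mathrm{Re}\Big((2\pi)^{-2}\int_{\mathcal A_n} e^{-ia'u}\phi_{U_i\mid\gamma,b}(a)\,da\Big).
\]
\begin{itemize}
\item[(I)] The truncation bias $\sup_u|\bar f(u)-f_{U_i\mid\gamma,b}(u)|$. This is bounded by $(2\pi)^{-2}\int_{\mathcal A_n^c}|\phi_{U_i\mid\gamma,b}(a)|\,da$.
\item[(II)] The estimation error $\sup_u|\wh f-\bar f|$. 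This is bounded, uniformly in $u$, by $(2\pi)^{-2}\int_{\mathcal A_n}|\wh\phi(a)-\phi(a)|\,da$.
\end{itemize}

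For (I) we use \Cref{assn:f_u}\eqref{assn:f_u:char}: the integrand is dominated by $M_\phi(1+\|a\|)^{-\rho}$ with $\rho>2$. The region $\{\|a\|_\infty>\bar\vartheta_n\}$ contributes $O(\bar\vartheta_n^{2-\rho})\to0$. The strips $\{|a_1|<\underline\vartheta_n\}\cup\{|a_2|<\underline\vartheta_n\}$ contribute $O(\underline\vartheta_n)\to0$, since $\int_{\mathbb R}(1+|a_2|)^{-\rho}da_2<\infty$.

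For (II) we bound $|\wh\phi(a)-\phi(a)|$ uniformly over $a\in\mathcal A_n$ by splitting into two pieces.
\begin{itemize}
\item[(IIa)] The tail of the true density: $\int_{\mathcal U_n^c}f_{\wt U(a)\mid\gamma,b}(\tilde u)\,d\tilde u$. Since $\wt U(a)=a_1U_{i1}+a_2U_{i2}$ and $|a_j|\le\bar\vartheta_n$, Markov's inequality with \Cref{assn:f_u}\eqref{assn:f_u:tail} bounds this by $O\big((\bar\vartheta_n/\bar c_n)^p\big)$, up to the shift $a_1\gamma y_0$ that defines $\mathcal U_n$ from $\mathcal C_n$.
\item[(IIb)] The density error on the window: $\int_{\mathcal U_n}|\wh f_{\wt U(a)\mid\gamma,b}-f_{\wt U(a)\mid\gamma,b}|\,d\tilde u\le 2\bar c_n\sup_c|\wh f_{C_i\mid\gamma,b}(c)-f_{C_i\mid\gamma,b}(c)|$.
\end{itemize}

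The ratio in (IIb) is handled by writing
\[
\wh f_{C\mid\gamma,b}-f_{C\mid\gamma,b}=\frac{\wh f_{C,\gamma,b}-f_{C,\gamma,b}}{\wh f_{\gamma,b}}+f_{C,\gamma,b}\,\frac{f_{\gamma,b}-\wh f_{\gamma,b}}{\wh f_{\gamma,b}f_{\gamma,b}}.
\]
\Cref{prop:margin} together with \Cref{assn:rc}\eqref{assn:rc:positive} keeps $\wh f_{\gamma,b}\ge M_b/2$ with probability approaching one, and controls the second term at rate $r_n^{p/(p+1)}$. The first term is a Jacobian-rescaled version of $\wh f_{C(\bar d),S(\bar d)}-f_{C(\bar d),S(\bar d)}$. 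The Jacobian factor is $|\delta_1a_1^{J_x}a_2^{J_x+1}|\le C\bar\vartheta_n^{2J_x+1}$, and the plug-in error from $\wh\delta$ enters through the $\delta$-argument and the factor, each at rate $n^{-1/2}$. Altogether this gives
\[
\sup_{a\in\mathcal A_n}|\wh\phi(a)-\phi(a)|=O_p\big(\bar\vartheta_n^{p}\bar c_n^{-p}+\bar c_n\bar\vartheta_n^{2J_x+1}\rho_n\big),
\]
where $\rho_n$ is the uniform-in-$a$ rate of the Radon transform estimator. Multiplying by the area $|\mathcal A_n|\le4\bar\vartheta_n^2$, Assumption S.2 on the tuning sequences is what makes (II) $=o_p(1)$.

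The main obstacle is obtaining $\rho_n$, i.e.\ a rate for $\sup_{a\in\mathcal A_n}\sup_z|\wh f_{C(\bar d),S(\bar d)}(z)-f_{C(\bar d),S(\bar d)}(z)|$ that holds uniformly over a growing index set, whereas \Cref{lemma:rte} gives it only for a fixed $d$. Here $\bar d$ depends on $a$, and $a$ ranges over a growing set. To get uniformity in the index we would do the following.
\begin{enumerate}
\item Use Assumption S.1 so that the bias bound of \citet{bissantz2014confidence} and the tail bound of \Cref{assn:f_cs}\eqref{assn:f_cs:tail} hold with constants uniform in $a$.
\item Handle the variance term by a chaining/grid argument over $(a,z)$. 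Cover $\mathcal A_n$ by a grid of polynomial cardinality in $n$, apply the Bernstein-type exponential inequality underlying \Cref{lemma:rte} at each grid point, and take a union bound; this costs only a $\ln n$ factor.
\item Interpolate between grid points using the Lipschitz property of $K$ (\Cref{assn:K_nu}\eqref{assn:K_nu:lip}). Since $v_i(a)$ is linear in $a$ with slope bounded by $|\tilde y_{i1}|+|\tilde y_{i2}|$, the finiteness of $E|\wt Y_{it}|$ in the theorem controls the Lipschitz constant in $a$.
\item Bound the continuity of the true density in $a$ via \Cref{assn:f_u}\eqref{assn:f_u:smooth}, which gives a bound on $\|\nabla_a f_{\wt U(a)\mid\gamma,b}\|$ in the spirit of Lemma S3 of \citet{masten2018random}.
\end{enumerate}
With $\rho_n$ in hand, (I) and (II) both vanish, which yields the stated $o_p(1)$ uniformly in $(u_1,u_2)$.
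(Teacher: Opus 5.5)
Your proposal is correct and follows the paper's proof essentially step for step. It uses the same split of the sup-error into $\frac{1}{(2\pi)^2}\int_{\mathcal A_n}|\wh\phi-\phi|\,da$ plus truncation pieces of order $O(\underline\vartheta_n)+O(\bar\vartheta_n^{-(\rho-2)})$ from \Cref{assn:f_u}\eqref{assn:f_u:char}, the same window-plus-Markov bound $|\wh\phi(a)-\phi(a)|\le 2\bar c_n\sup_{c\in\mathcal C_n}|\wh f_{C\mid\gamma,b}(c)-f_{C\mid\gamma,b}(c)|+O(\bar c_n^{-p}\bar\vartheta_n^{p})$ on $\mathcal A_n$, and the same appeal to a rate uniform in $a\in\mathcal A_n$ (the paper's $\varepsilon_n^*$, taken from Lemma A.2) that Assumption S.2 then drives to zero. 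Your sketch of that uniform rate goes beyond what the paper writes out in this proof; it uses the ratio decomposition, a grid over $\mathcal A_n$ with Lipschitz interpolation controlled by $E|\wt Y_{it}|<\infty$, and \Cref{assn:f_u}\eqref{assn:f_u:smooth}. One small fix there: integrate the piece $f_{C\mid\gamma,b}\,(f_{\gamma,b}-\wh f_{\gamma,b})/\wh f_{\gamma,b}$ over $c$ rather than bounding it in sup norm. Integration gives $O_p(r_n^{p/(p+1)})$ uniformly in $a$, whereas $\sup_c f_{C(a,\gamma)\mid\gamma,b}(c)$ can be of order $1/\underline\vartheta_n$ near $a=(\pm\underline\vartheta_n,\pm\underline\vartheta_n)$.
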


\begin{remark}
\citet[Section~C.4]{masten2018random} proves sup-norm consistency of this Fourier inversion under compact support;
\Cref{theorem:u} permits non-compact support.
The proof bounds the error of the Fourier inversion of $\wh\phi$ over $\mathbb R^2$ by decomposing it into  $R_1+R_2+R_3$, where
\begin{equation*}
R_1=\frac{1}{(2\pi)^2}\int_{\mathcal A_n}|\wh\phi(a)-\phi(a)|\,da,\qquad
R_2+R_3=\frac{1}{(2\pi)^2}\int_{\mathbb R^2\setminus\mathcal A_n}|\phi(a)|\,da,
\end{equation*}
where $R_2$ and $R_3$ decompose further over $[-\bar{\vartheta}_n,\bar{\vartheta}_n]^2\setminus\mathcal A_n$ and $\mathbb R^2\setminus[-\bar{\vartheta}_n,\bar{\vartheta}_n]^2$, respectively.
We derive the convergence rates of these components.
The estimation error $R_1$ is due to three sources: the inverse Radon transform estimation, the trimming over $\mathcal A_n$, and the truncation of the non-compact intercept support.
The truncation errors satisfy $R_2 + R_3 = O(\underline{\vartheta}_n+\bar{\vartheta}_n^{-(\rho-2)})$, which follows from the characteristic function decay in \Cref{assn:f_u}\eqref{assn:f_u:char}.
\Cref{theorem:u} then follows from Assumption~S.2.
\end{remark}

\section{Monte Carlo Simulations}\label{sec:simulations}

In this section, we evaluate the finite-sample performance of the multi-step estimators for the joint density of $(\gamma_i,\beta_i)$ and the conditional density of $U_i \equiv (U_{i1},U_{i2})'$ given $(\gamma_i,\beta_i)$. We focus on the case with $T=2$ and fix the initial condition at $y_0=1$ throughout this simulation study.

The data-generating process (DGP) follows \eqref{eq:seq-exo-w}, with $(\delta_1,\delta_2)=(1,1)$. The regressors are generated independently of $(\gamma_i,\beta_i,U_i)$ as follows:
\[
X_i \equiv (X_{i1},X_{i2})'\sim \mathcal{N}(0,\Sigma_x),
\qquad
\Sigma_x=
\begin{pmatrix}
2 & 1 \\
1 & 2
\end{pmatrix},
\]
and
\(
W_{i2}=0.5\,W_{i1}+0.3\,W^2_{i1}+\varepsilon_{w,i},
\)
where $W_{i1}\sim\mathcal{N}(0,2.5)$ and $\varepsilon_{w,i}\sim\mathcal{N}(0,2.5)$.
The autoregressive coefficient is generated as $\gamma_i \sim \text{Beta}(6,3).$
The marginal distributions of $\beta_i$ and $U_i$ are given by
\begin{equation}\label{eq:marginals}
\beta_i \sim \mathcal{N}(\mu_\beta,\Sigma_\beta),
\qquad
U_i \sim \mathcal{N}(0,\Sigma_u),
\end{equation}
where
\[
\mu_\beta=(1,1)', \qquad
\Sigma_\beta=
\begin{pmatrix}
0.5 & 0.25 \\
0.25 & 0.5
\end{pmatrix},
\qquad
\Sigma_u=
\begin{pmatrix}
0.3 & 0.15 \\
0.15 & 0.3
\end{pmatrix}.
\]
\noindent We consider the following four designs.
\paragraph{Baseline.} The random coefficients and intercepts $(\gamma_i, \beta_i, U_i)$ are mutually independent, with marginals as specified above.
\paragraph{Correlated.} This design introduces dependence among $(\gamma_i, \beta_i, U_i)$ while preserving the conditional mean restriction in Assumption~\ref{assn:seq-exo}.
While $\gamma_i$ has the same Beta marginal as in the baseline design, $\beta_i$ and $U_i$ are constructed via
$$
    \beta_i = \mu_\beta + \exp(\lambda_\beta \gamma_i)\, L_\beta\, \eta_i^\beta, \qquad U_i = \exp(\lambda_u \gamma_i)\, L_u\, \eta_i^u,
$$
where $L_\beta$ and $L_u$ are the lower-triangular matrices from the Cholesky decompositions of $\Sigma_\beta$ and $\Sigma_u$, respectively.
The shocks $\eta_i = (\eta_i^\beta, \eta_i^u)'$ are drawn
independently of $\gamma_i$ from $\mathcal{N}(0, \Sigma_{\mathrm{corr}})$, and the $4$--by--$4$ correlation matrix $\Sigma_{\mathrm{corr}}$ has identity diagonal blocks and cross-block matrix
\[
\Sigma^{\beta,u} = \begin{pmatrix} 0.25 & 0.10 \\ 0.10 & 0.25 \end{pmatrix}.\]
We set $\lambda_\beta=\lambda_u=0.3$.
Under this design, conditional on $\gamma_i$, the vectors $\beta_i$ and $U_i$ have the same means as in the baseline design but covariance matrices that scale with $\gamma_i$:
$$\beta_i \mid \gamma_i \sim \mathcal{N}\left(\mu_\beta,\,
  \exp(2\lambda_\beta\gamma_i)\, \Sigma_\beta\right), \qquad
    U_i \mid \gamma_i \sim \mathcal{N}\left(0,\,
        \exp(2\lambda_u\gamma_i)\, \Sigma_u\right).$$

\paragraph{Scale-dependence.}
This design is simpler than the preceding correlated design in that there is no direct correlation between $\beta_i$ and $U_i$; their dependence is solely driven by $\gamma_i$ through the scale:
\[
\beta_i \mid \gamma_i \sim \mathcal{N}\left(\mu_\beta,\, v(\gamma_i)\,\Sigma_\beta\right),
\qquad
U_i \mid \gamma_i \sim \mathcal{N}\left(0,\, w(\gamma_i)\,\Sigma_u\right),
\]
where
\(
v(\gamma)=\exp\left[2\lambda_\beta(\gamma-\mathbb{E}[\gamma])\right]/Z(\lambda_\beta)\) and
\(w(\gamma)=\exp\left[2\lambda_u(\gamma-\mathbb{E}[\gamma])\right]/Z(\lambda_u),
\)
with $Z(\cdot)$ being a normalization constant such that $\mathbb{E}[v(\gamma_i)]=\mathbb{E}[w(\gamma_i)]=1$.

\paragraph{Bimodal.} The random coefficients $\gamma_i$ and $\beta_i$ are independent with the same marginal distributions as in the baseline design, but $U_i$ follows a symmetric Gaussian mixture:
\[
U_i \sim \tfrac12\,\mathcal{N}(\mu_u,\Sigma_u)
       + \tfrac12\,\mathcal{N}(-\mu_u,\Sigma_u),
\qquad \mu_u=(0.7,-0.7)'.
\]
\citet{hoderlein2010analyzing} and \citet{masten2018random} use similar Gaussian-mixture designs to investigate estimator performance for multi-modal densities.
For each design, we vary the sample sizes $n \in \{500, 2000, 8000\}$ and generate $N_{\mathrm{rep}} = 500$ Monte Carlo replications.
The estimators for the joint density $\wh f_{\gamma_i,\beta_i}$ and for the conditional density $ \wh f_{U_{i1},U_{i2} \mid \gamma_i=\gamma,\beta_i=b}$ are defined as in Sections~\ref{subsec:est_RC} and~\ref{subsec:est_U}.
Following \citet{hoderlein2010analyzing}, we select the bandwidths $(\tau, \nu)$ by minimizing the mean density-weighted integrated squared error.
Following \citet{masten2018random}, we obtain an optimal $\nu$ for a single $ a \in [-\bar{\vartheta}_n,\bar{\vartheta}_n]^2$ and then rescale it to the others. In the baseline design at $n=8000$, the average $\nu$ used over all grid points $a$ is $0.65$.
As the asymptotic theory prescribes, the cutoff $\bar{\vartheta}_n$ increases with the sample size, and we scale the bandwidths for the other sample sizes accordingly.

We first present the performance of $\wh f_{\gamma_i, \beta_i}$ through one-dimensional conditional densities. Table~\ref{tab:jgb_mise} reports the mean integrated squared error (MISE) of $\wh f_{\gamma_i \mid \beta_i}$, $\wh f_{\beta_{i1} \mid \gamma_i, \beta_{i2}}$, and $\wh f_{\beta_{i2} \mid \gamma_i, \beta_{i1}}$.
For every design and conditional density reported, the MISE declines with $n$, confirming the consistency of our estimator.

\begin{table}[H]
\begin{center}
\caption{Conditional densities of the estimated joint density of $(\gamma_i, \beta_i)$}
\label{tab:jgb_mise}
\scalebox{1}{
\begin{tabular}{ll ccc}
\toprule
& & \multicolumn{3}{c}{MISE} \\
\cmidrule(lr){3-5}
 & $n$ & $\wh f_{\gamma_i \mid \beta_i}$ & $\wh f_{\beta_{i1} \mid \gamma_i, \beta_{i2}}$ & $\wh f_{\beta_{i2} \mid \gamma_i, \beta_{i1}}$ \\
\midrule
 \multirow{3}{*}{Baseline}
 & 500  & 0.2359 & 0.1054 & 0.0654 \\
 & 2000 & 0.2001 & 0.0790 & 0.0476 \\
 & 8000 & 0.1749 & 0.0655 & 0.0369 \\
\midrule
\multirow{3}{*}{Correlated}
 & 500  & 0.2699 & 0.0664 & 0.0414 \\
 & 2000 & 0.2372 & 0.0468 & 0.0306 \\
 & 8000 & 0.2052 & 0.0374 & 0.0227 \\
\midrule
\multirow{3}{*}{Scale}
 & 500  & 0.2417 & 0.1031 & 0.0607 \\
 & 2000 & 0.2046 & 0.0792 & 0.0459 \\
 & 8000 & 0.1747 & 0.0644 & 0.0355 \\
\midrule
\multirow{3}{*}{Bimodal}
 & 500  & 0.2434 & 0.1066 & 0.0639 \\
 & 2000 & 0.2096 & 0.0816 & 0.0483 \\
 & 8000 & 0.1860 & 0.0695 & 0.0395 \\
\bottomrule
\end{tabular}}\end{center}

\vspace{12pt}

{\noindent\footnotesize \textit{Notes:} The table reports, across $N_{\mathrm{rep}} = 500$ Monte Carlo replications, the mean integrated squared error (MISE) of the estimated joint density $\wh f_{\gamma_i, \beta_i}$, summarized through its three one-dimensional conditional densities; each varies one coordinate of $(\gamma_i, \beta_{i1}, \beta_{i2})$ with the other two held at their medians.}
\end{table}

Figures~\ref{fig:marginal_Q50} and~\ref{fig:heatmap_Q50} show the estimated conditional density of $U_i$ along each axis and as contour plots, and Table~\ref{tab:cond_u} reports its MISE and the bias, standard deviation (SD), and mean squared error (MSE) of the estimated interquartile range (IQR) of its two marginals.
For the three unimodal designs (Baseline, Correlated, and Scale), the MISE falls with $n$ and the Monte Carlo mean density converges toward the true density, with the bias small in large samples.
The conditional location is recovered accurately: along each axis the estimated density stays centered on the true density even in small samples.
\citet[Figure~2 and Table~2]{masten2018random} likewise finds the location well recovered even when the full shape is hard to estimate.

The IQR in Table~\ref{tab:cond_u} is estimated less precisely in small samples but approaches its true value as $n$ grows. For the three unimodal designs it tends to be over-estimated, possibly due to over-smoothing at small $n$ which blurs the estimate: it moves probability mass away from the peak into the sides, so the estimated density is too flat and too wide, and its IQR too large (Figure~\ref{fig:marginal_Q50}).
\citet[Table~2]{masten2018random} reports the same upward IQR bias for his unimodal designs, where the estimated density is more spread out than the true density.
The Bimodal design is the most demanding. The MISE declines slowly ($0.054$ to $0.035$). Along each axis the estimate for the marginal density does not register a clear bimodal pattern even at $n=8000$ (Figure~\ref{fig:marginal_Q50}), although the contour separates its two modes by $n=2000$ (Figure~\ref{fig:heatmap_Q50}). Relatedly, the spread and hence the IQR are \emph{under}-estimated. Again, this pattern is consistent with \citet{masten2018random}'s observation that the shape is harder to estimate for multi-modal densities.

\begin{table}[ht]
\begin{center}
\caption{Estimator for the conditional density of $U_i$}
\label{tab:cond_u}
\scalebox{1}{
\begin{tabular}{ll c ccc ccc}
\toprule
& & & \multicolumn{3}{c}{$\wh{\mathrm{IQR}}(U_{i1} \mid \gamma_i, \beta_i)$} & \multicolumn{3}{c}{$\wh{\mathrm{IQR}}(U_{i2} \mid  \gamma_i, \beta_i)$} \\
\cmidrule(lr){4-6} \cmidrule(lr){7-9}
 & $n$ & MISE & Bias & SD & MSE & Bias & SD & MSE \\
\midrule
 \multirow{3}{*}{Baseline}
 & 500  & 0.0841 & 0.4571 & 0.0306 & 0.2099 & 0.4353 & 0.0294 & 0.1903 \\
 & 2000 & 0.0277 & 0.2313 & 0.0171 & 0.0538 & 0.2157 & 0.0172 & 0.0468 \\
 & 8000 & 0.0053 & 0.1039 & 0.0099 & 0.0109 & 0.0903 & 0.0099 & 0.0082 \\
\midrule
\multirow{3}{*}{Correlated}
 & 500  & 0.0516 & 0.4449 & 0.0293 & 0.1988 & 0.4256 & 0.0297 & 0.1820 \\
 & 2000 & 0.0163 & 0.2117 & 0.0171 & 0.0451 & 0.1920 & 0.0180 & 0.0372 \\
 & 8000 & 0.0044 & 0.0602 & 0.0125 & 0.0038 & 0.0435 & 0.0137 & 0.0021 \\
\midrule
\multirow{3}{*}{Scale}
 & 500  & 0.0833 & 0.4552 & 0.0311 & 0.2082 & 0.4331 & 0.0292 & 0.1884 \\
 & 2000 & 0.0271 & 0.2297 & 0.0182 & 0.0531 & 0.2130 & 0.0179 & 0.0457 \\
 & 8000 & 0.0055 & 0.1024 & 0.0097 & 0.0106 & 0.0871 & 0.0107 & 0.0077 \\
\midrule
\multirow{3}{*}{Bimodal}
 & 500  & 0.0538 & $-$0.2158 & 0.0738 & 0.0520 & $-$0.2236 & 0.0775 & 0.0560 \\
 & 2000 & 0.0421 & $-$0.2136 & 0.0626 & 0.0496 & $-$0.2129 & 0.0734 & 0.0507 \\
 & 8000 & 0.0346 & $-$0.1057 & 0.0489 & 0.0136 & $-$0.1246 & 0.0594 & 0.0191 \\
\bottomrule
\end{tabular}}\end{center}

\vspace{12pt}

{\noindent\footnotesize \textit{Notes:} The table reports, across $N_{\mathrm{rep}} = 500$ Monte Carlo replications, the mean integrated squared error (MISE) of the estimated conditional density $\wh f_{U_{i1},U_{i2}\mid \gamma_i, \beta_i}$; and the bias, standard deviation (SD), and mean squared error (MSE) of the estimated interquartile range $\wh{\mathrm{IQR}}(U_{ij} \mid \gamma_i, \beta_i)$, $j=1,2$, of the two marginal distributions. All quantities are evaluated at the median value of $(\gamma_i, \beta_i)$.}
\end{table}

\begin{figure}[p]
\begin{center}
\scalebox{0.9}{
\includegraphics[width=\textwidth,height=0.92\textheight,keepaspectratio]{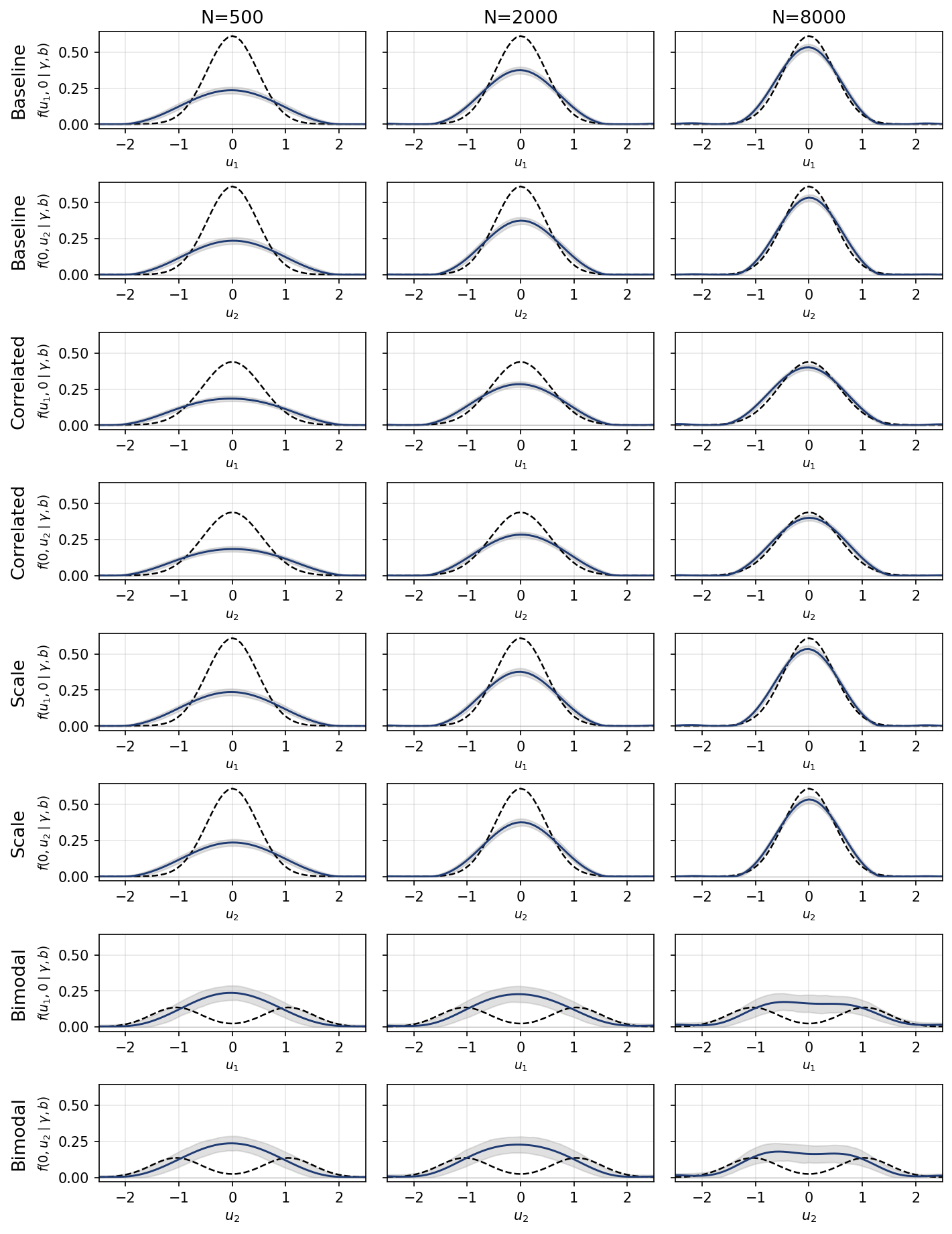}}
\caption{Estimated conditional density of $U_i$ along each axis}
\label{fig:marginal_Q50}
\end{center}

\vspace{12pt}

{\noindent\footnotesize  \textit{Notes:} $\wh f_{U_{i1}, U_{i2} \mid \gamma_i, \beta_i}$ along $u_1$ (at $u_2=0$) and along $u_2$ (at $u_1=0$), at the median value of $(\gamma_i,\beta_i)$. Each row shows these two curves for one design; columns are the sample sizes $n \in \{500, 2000, 8000\}$. Black dashed: the true density; navy solid: the Monte Carlo mean across $N_{\mathrm{rep}} = 500$ replications; shaded band: the 95\% pointwise Monte Carlo confidence band.}
\end{figure}

\begin{figure}[p]
\begin{center}
\scalebox{0.9}{
\includegraphics[width=\textwidth,height=0.92\textheight,keepaspectratio]{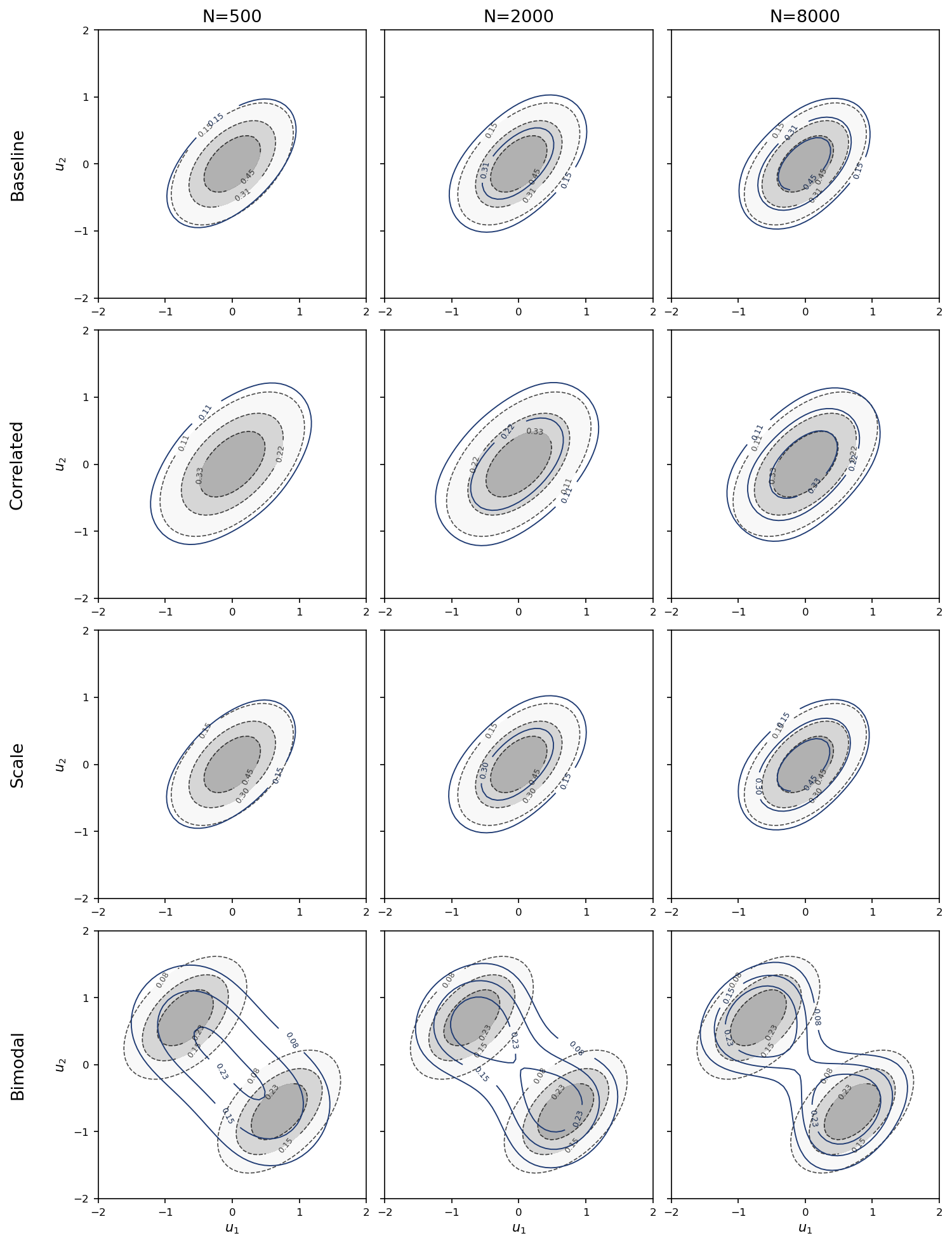}}
\caption{Contour plots of the estimated conditional density of $U_i$}
\label{fig:heatmap_Q50}
\end{center}

\vspace{12pt}

{\noindent\footnotesize  \textit{Notes:} $\wh f_{U_{i1}, U_{i2} \mid \gamma_i, \beta_i}$ for the four designs (rows) at sample sizes $n \in \{500, 2000, 8000\}$ (columns), conditioning on the median value of $(\gamma_i, \beta_i)$. Gray fill with black dashed contours: the true density; navy solid contours: the Monte Carlo mean across $N_{\mathrm{rep}} = 500$ replications. The contours of the true density enclose 25\%, 50\%, and 75\% of its probability mass, and the Monte Carlo contours use the same levels.}
\end{figure}

\FloatBarrier
\section{Conclusion}

We recover the entire joint distribution of the random coefficients and the time-varying errors. This is new relative to the literature on random-coefficient triangular and simultaneous equation models which left the distribution of time-varying errors unidentified. This is possible because a linear combination of the outcome history behaves as a single-equation random-coefficient regression in distinct regressors: inverting a Radon transform recovers a reduced-form density, and varying the combination and using deconvolution recover the distribution of the time-varying errors. We apply the analog principle to this constructive identification strategy, and propose a closed-form, multi-step estimator.

The identifying condition, a distributional form of strict exogeneity under which the covariates are independent of the coefficients and errors given the initial condition, is both the source of this reach and the main limitation of the approach: it permits arbitrary dependence among the coefficients and errors, but lets the covariates and the unobserved heterogeneity be related only through the initial condition.
Several questions remain for future research, including inference, data-driven bandwidth choices, and a continuously distributed initial condition.

\vspace{1cm}
\appendix
\renewcommand{\thesection}{\Alph{section}}
\noindent{\Large\textbf{Appendix}}

\section{Proofs in Section \ref{sec:asymp}}\label{sec:AppendixA}

We maintain the conditions for identification in \Cref{theorem:two} throughout, and do not reiterate them separately below for brevity.
Under the conditions of \Cref{theorem:two}, the conditional density $f_{V_i\mid Q_i}$, which equals the Radon transform of $f_{C_i(d),S_i(d)}$ over the directions on the support of $Q_i$, uniquely determines $f_{C_i(d),S_i(d)}$ within the class of distributions satisfying \Cref{assn:richSupp}.

The following lemma establishes uniform convergence of the Radon transform estimator $\wh f_{C_i(d), S_i(d)}$ for any $d \in \mathbb R^2$ with $d_2 \ne 0$.
\citet{hoderlein2010analyzing} proved unweighted $L^2$ consistency of the Radon transform estimator; we strengthen this to sup-norm convergence by combining the variance analysis as in \citet{li2007nonparametric} with the uniform bias bound of \cite{bissantz2014confidence}.

\medskip

\begin{lemma}[Sup-norm convergence of the RTE] \label{lemma:rte}
    Fix $d \in \mathbb R^2$ with $d_2 \ne 0$, and let $\mathcal S \subset \mathbb{R}^l$ be a compact subset of the joint support of $(C_i(d),S_i(d))$.
    Suppose Assumptions~\ref{assn:f_cs}, \ref{assn:f_Q}, \ref{assn:K_nu}, and \ref{assn:tuning} hold. Then,
    \[\sup_{(c,s) \in \mathcal S} \left| \wh f_{C_i(d),S_i(d)}(c,s) - f_{C_i(d),S_i(d)}(c,s) \right| = O_p\!\left(r_n\right),\]
    where $r_n \equiv r_{1n}+\underline q_n^{1/2}+\underline q_n^{\,1-k}r_{2n}\,
r_{1n}^{-2(l-1)/(2m-l-2\varsigma)}$.
\end{lemma}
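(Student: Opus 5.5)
We will split the estimation error into three pieces by introducing two infeasible intermediate estimators. Let $\wt f$ be the version of $\wh f_{C_i(d),S_i(d)}$ in \eqref{eq:est_cstar} with $\wh f_{Q_i}$ replaced by the true $f_{Q_i}$, still trimmed at $\underline q_n$. Let $\bar f \equiv E\wt f$. Then
\[
\wh f-f=(\wh f-\wt f)+(\wt f-\bar f)+(\bar f-f),
\]
and we bound each term uniformly over the compact set $\mathcal S$. These three terms correspond, respectively, to the third, first and second summands of $r_n$.

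\textbf{Bias term.} By the Radon-inversion identity (conditioning on $Q_i$, using the conditional independence \eqref{eq:cond-indep} and the fact that $f_{V_i\mid Q_i}$ is the Radon transform of $f_{C_i(d),S_i(d)}$), we have
\[
\bar f(z)=2\int_{\mathbb S(\underline q_n)}\int K_\nu(q'z-v)\,\mathcal R f(q,v)\,dv\,dq .
\]
This is the untrimmed RTE mean minus the contribution of the cap $\{q_1<\underline q_n\}$. For the untrimmed part, Lemma~1 of \citet{bissantz2014confidence} gives the uniform bound $\|K_\nu * f - f\|_\infty=O(\nu^{m-l/2-\varsigma})$ under \Cref{assn:f_cs}\eqref{assn:f_cs:smooth} and the order condition \Cref{assn:K_nu}\eqref{assn:K_nu:order}; the $\varsigma$ slack is used to pass from the $L^2$ Sobolev norm to a sup bound. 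For the cap, we use the tail bound \Cref{assn:f_cs}\eqref{assn:f_cs:tail} (with $\kappa>l-1$, so $\mathcal R f$ is integrable in $v$ uniformly) together with boundedness of the back-projected kernel. This should give a contribution of order $\underline q_n^{1/2}$, coming from the measure of the cap on the hemisphere combined with Cauchy--Schwarz. Plugging in the bandwidth from \Cref{assn:K_nu}\eqref{assn:K_nu:bw}, $\nu^{m-l/2-\varsigma}=r_{1n}$.

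\textbf{Stochastic term.} We follow the standard covering argument of \citet{li2007nonparametric}. Cover $\mathcal S$ by $N_n$ balls of radius $\epsilon_n$ with $N_n$ polynomial in $n$. Lipschitz continuity of $K$ (\Cref{assn:K_nu}\eqref{assn:K_nu:lip}) controls the oscillation within balls at cost $\nu^{-l-1}\epsilon_n/M_{q,n}$. At the grid points we apply Bernstein's inequality with a union bound. Each summand is bounded by $\|K\|_\infty\nu^{-l}/M_{q,n}$, and $M_{q,n}\ge M_q\underline q_n^k$. The key estimate is the variance: after integrating out $V_i$ given $Q_i$, we get
\[
\mathrm{Var}\le n^{-1}\int_{\mathbb S(\underline q_n)}f_{Q}(q)^{-1}\int K_\nu^2\,\mathcal R f\,dv\,dq\lesssim n^{-1}\nu^{-(2l-1)}\underline q_n^{1-k},
\]
using $\int K_\nu^2 \asymp \nu^{-(2l-1)}$, the bound $\sup_v \mathcal R f<\infty$ (from the tail condition), and $\int_{q_1\ge\underline q}q_1^{-k}dq\asymp \underline q^{1-k}$. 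This yields a rate $(\underline q_n^{1-k}\ln n/(n\nu^{2l-1}))^{1/2}$. Balancing against the bias with the chosen $\nu$ reproduces $r_{1n}$, possibly after absorbing the $\varsigma$ slack. \Cref{assn:tuning} ensures the Bernstein range term is negligible.

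\textbf{Plug-in density term.} We write
\[
\wh f-\wt f=\frac2n\sum_i \mathbf 1\{q_{i1}\ge\underline q_n\}\,K_\nu(\cdot)\,\frac{f_Q-\wh f_Q}{\wh f_Q f_Q}.
\]
By \citet[Appendix~B]{hoderlein2010analyzing}, $\sup|\wh f_Q-f_Q|=O_p(r_{2n})$ under \Cref{assn:f_Q}. On $\mathbb S(\underline q_n)$, the tuning condition in \Cref{assn:tuning} makes $r_{2n}=o(\underline q_n^k)$, so $\wh f_Q\ge M_{q,n}/2$ with probability tending to one. We then bound $n^{-1}\sum|K_\nu(\cdot)|/f_Q$ uniformly by its mean plus the same stochastic fluctuation. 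Its mean is $\lesssim \int_{\mathbb S(\underline q_n)}\int|K_\nu|\,dv\,dq$, which is at most a power of $\nu^{-1}$ times a $\underline q_n$ factor. Combining gives $\underline q_n^{1-k}r_{2n}\nu^{-(l-1)}$, and substituting $\nu$ in terms of $r_{1n}$ gives $\nu^{-(l-1)}=r_{1n}^{-2(l-1)/(2m-l-2\varsigma)}$. I expect this step to be the main obstacle. Obtaining exactly the $\underline q_n^{1-k}$ power rather than $\underline q_n^{-2k}$ requires bounding $1/(\wh f_Q f_Q)$ by $1/f_Q$ times a bounded factor, and using $f_Q$ itself to cancel one power in the expectation. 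Tracking the $L^1$ norm of the oscillating kernel $K_\nu$ also requires care. I would first try to bound $\int|K|<\infty$ using the decay of $K$ implied by the smooth filter $\mathcal L$, and fall back on the cruder estimate $\|K_\nu\|_\infty$ times the support size of $\mathcal R f$ if that fails.
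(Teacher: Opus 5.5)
Your outline matches the paper's proof step for step. The shared elements are: the infeasible estimator with the true $f_{Q_i}$; the split into plug-in, stochastic and bias terms; the further split of the bias into the trimming part $(A_{\nu,n}-A_\nu)\mathcal Rf$ and the regularization part handled by Lemma~1 of \citet{bissantz2014confidence}; the Bernstein-plus-covering bound with variance of order $\Xi_{q,n}\nu^{-(2l-1)}/n$, where boundedness of $\mathcal Rf$ is exactly what $\kappa>l-1$ buys; the balancing that yields $\nu$ and $r_{1n}$; and a plug-in bound of order $\nu^{-(l-1)}\Xi_{q,n}r_{2n}$.

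The step whose justification fails as written is the trimming bias. The tail condition plays no role there. Moreover, under the stated assumptions the available sup bound on the back-projected kernel $q\mapsto\int K_\nu(q'z-v)(\mathcal Rf)(q,v)\,dv$ is $\|K_\nu\|_{L^1}\sup\mathcal Rf\asymp\nu^{-(l-1)}$. That makes the excluded region $\{q\in\mathbb S^{l-1}_+:q_1<\underline q_n\}$ contribute $O(\nu^{-(l-1)}\underline q_n)$, not $O(\underline q_n^{1/2})$.

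The missing ingredient is the Sobolev bound, used through the Fourier slice theorem: $\int K_\nu(q'z-v)(\mathcal Rf)(q,v)\,dv=(2\pi)^{-1}\int_{\mathbb R}e^{itq'z}\mathcal FK_\nu(t)\,\mathcal Ff(tq)\,dt$. Since $|\mathcal L|\le1$, its modulus is at most a constant times $\int_{\mathbb R}|t|^{l-1}|\mathcal Ff(tq)|\,dt$. Integrating over the excluded region and passing to polar coordinates $\xi=tq$ gives $\int_{\Gamma_n}|\mathcal Ff(\xi)|\,d\xi$ over the double cone $\Gamma_n\equiv\{\xi:|\xi_1|<\underline q_n\|\xi\|\}$. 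Cauchy--Schwarz against the weight $(1+\|\xi\|^2)^{m}$ then bounds this by $B\,\big(2\mu(\{q\in\mathbb S^{l-1}_+:q_1<\underline q_n\})\int_0^\infty t^{l-1}(1+t^2)^{-m}dt\big)^{1/2}\lesssim B\,\underline q_n^{1/2}$. The integral is finite precisely because $m>l/2$. This is the paper's $M_TB\,\underline q_n^{1/2}$ bound (Lemma S.1). Relatedly, the untrimmed mean $A_\nu\mathcal Rf$ is $f$ filtered by $\mathcal L(\nu\|\xi\|)$ in $\mathbb R^l$, not $K_\nu*f$.

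In the plug-in term, the chain you actually write ($\wh f_{Q_i}\ge M_{q,n}/2$, then the mean of $n^{-1}\sum|K_\nu|/f_{Q_i}$) only gives $\underline q_n^{-k}r_{2n}\nu^{-(l-1)}$. So ``combining'' does not yet produce the power $\underline q_n^{1-k}$.

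The fix you gesture at works once stated precisely, and it is what the paper's Lemma S.3 delivers. Assumption~\ref{assn:tuning} implies $r_{2n}=o(\underline q_n^{k})$, so the event $\sup|\wh f_{Q_i}-f_{Q_i}|\le M_{q,n}/2$ has probability tending to one. On that event $\wh f_{Q_i}\ge f_{Q_i}/2$ pointwise on $\mathbb S(\underline q_n)$; this is a comparison with $f_{Q_i}$, not a bound by a constant. Hence $|\wh f-\wt f|\le 4\sup|\wh f_{Q_i}-f_{Q_i}|\cdot n^{-1}\sum_i\mathbf 1\{Q_{i1}\ge\underline q_n\}|K_\nu(Q_i'z-V_i)|/f_{Q_i}(Q_i)^2$.

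One power of $f_{Q_i}$ is absorbed by the expectation. The mean of that average is therefore $\int_{\mathbb S(\underline q_n)}f_{Q_i}^{-1}\int|K_\nu(q'z-v)|(\mathcal Rf)(q,v)\,dv\,d\mu(q)\le\|K\|_{L^1}\,\nu^{-(l-1)}\sup\mathcal Rf\cdot\Xi_{q,n}$. Uniformity in $z$ follows from the same covering argument.

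Finally, $\|K\|_{L^1}<\infty$ holds even though $\mathcal L$ is not smooth. The function $\mathcal FK$ is continuous, compactly supported and piecewise smooth, so $K(w)=O(|w|^{-2})$. Your fallback through the support of $\mathcal Rf$ would not work, because that support need not be bounded.
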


\begin{proof}[Proof of \Cref{lemma:rte}]
Define an infeasible estimator
\begin{align*}
        \bar f_{C_i(d),S_i(d)}(c,s) = \frac{2}{n} \sum\nolimits_i \frac{\mathbf 1\{Q_{i1}\ge\underline q_n\}}{ f_{Q_i}(Q_i)}K_\nu(Q_i'(c,s')'-V_i(d)),
\end{align*}
and decompose \(\wh f_{C_i(d),S_i(d)}- f_{C_i(d),S_i(d)}\) as
\begin{equation*}
    \underbrace{\wh f_{C_i(d),S_i(d)}- \bar f_{C_i(d),S_i(d)}}_{I_1}
          + \underbrace{\bar f_{C_i(d),S_i(d)} - E(\bar f_{C_i(d),S_i(d)})}_{I_2}
           + \underbrace{E(\bar f_{C_i(d),S_i(d)})- f_{C_i(d),S_i(d)}}_{I_3}.
\end{equation*}
The Radon transform of $f:\mathbb R^l\rightarrow\mathbb R$ is $(Rf)(\varphi,\pi)\equiv \int_{\{z: \varphi'z=\pi\}}f(z)\,dz$, where $\pi \in \mathbb R$ and $\varphi \in \mathbb S^{l-1}$ with $\mathbb{S}^{l-1} \equiv \{z \in \mathbb{R}^l: \|z\|=1 \}$ being the unit sphere in $\mathbb R^l$.
A regularized inverse operator of the Radon transform, denoted by $A_\nu$, is a mapping from $\{g:\mathbb S_+^{l-1}\times\mathbb R\rightarrow\mathbb R\}$ to $\{f:\mathbb R^l\rightarrow\mathbb R\} $ defined as
\[
[A_\nu g](z)\equiv2\int_{\mathbb S_+^{l-1}}\int^\infty_{-\infty}K_{\nu}(\varphi'z-\pi)g(\varphi,\pi)\,d\pi\,d\mu(\varphi),
\]
where $K_\nu$ is a kernel function defined in Section \ref{subsec:est_RC} and $\mu$ denotes the Lebesgue measure on $\mathbb S_+^{l-1}$.
Define
\(
[A_{\nu,n}\, g](z)\equiv2\int_{\mathbb{S}(\underline q_n)}\int^\infty_{-\infty}K_{\nu}(\varphi'z-\pi)g(\varphi,\pi)\,d\pi\,d\mu(\varphi).
\)
We suppress $d$ in $C_i(d)$ and $S_i(d)$.
By construction,
\begin{align*}
    E(\bar f_{C_i,S_i}(c,s))
      & = \frac2n \sum\nolimits_i E\!\left[\frac{\mathbf 1\{Q_{i1}\ge\underline q_n\}}{f_{Q_i}(Q_i)}E\!\left(K_\nu(Q_i'(c,s')'-V_i)\mid Q_i\right) \right] \\
     &= \frac2n \sum\nolimits_i E\!\left[\frac{\mathbf 1\{Q_{i1}\ge\underline q_n\}}{f_{Q_i}(Q_i)} \int^{\infty}_{-\infty} K_\nu(Q_i'(c,s')'-v) (Rf_{C_i,S_i})(Q_i,v)\,dv \right] \\
     &=2\int_{\mathbb{S}(\underline q_n)} \left[\frac{1}{f_{Q_i}(q)} \int^{\infty}_{-\infty} K_\nu(q'(c,s')'-v) (Rf_{C_i,S_i})(q,v)\,dv\right]f_{Q_i}(q)\,d\mu(q)  \\
     &= [A_{\nu,n} (Rf_{C_i,S_i})](c,s),
\end{align*}
where the first equality uses the law of iterated expectations (with the outer expectation taken with respect to $Q_i$),
the second holds because $f_{V_i \mid Q_i}(v \mid q) = (Rf_{C_i, S_i})(q, v)$ which is implied by $(C_i,S_i)\perp Q_i$ under \Cref{assn:indep}, the third holds because the observations $Q_i$ are i.i.d.\ draws from the same density $f_{Q_i}$.

First, consider the bias term $I_3 = A_{\nu,n}(Rf_{C_i, S_i}) - f_{C_i, S_i}$, and decompose it as
\begin{equation*}
I_3 = \underbrace{\left(A_{\nu,n} - A_\nu\right)(Rf_{C_i,S_i})}_{I_{3,1}}
    + \underbrace{A_\nu (Rf_{C_i,S_i}) - f_{C_i,S_i}}_{I_{3,2}}.
\end{equation*}
Lemma S.1 bounds the two terms over $\mathbb R^l$, and since $\mathcal S\subset\mathbb R^l$,
\begin{equation*}
\sup_{(c,s)\in\mathcal S}|I_{3,1}|\le M_T\,B\,\underline q_n^{1/2},
\qquad
\sup_{(c,s)\in\mathcal S}|I_{3,2}|\le M_A\,B\,\nu^{m-l/2-\varsigma},
\end{equation*}
for some constants $M_T>0$ and $M_A>0$.
Then, by Lemma S.2,
\(
\sup_{(c,s)\in \mathcal S} |I_2| = O_p (\Xi_{q,n}^{1/2}\,\nu^{-l+1/2}\sqrt{\ln n / n}),
\)
where $\Xi_{q,n}\equiv\int_{\mathbb S(\underline q_n)}f^{-1}_{Q_i}(q)\,d\mu(q)$.
By \Cref{assn:f_Q}\eqref{assn:f_Q:positive},
\begin{align}\label{eq:xi_bound}
  \begin{split}
 \Xi_{q,n} &\le M_q^{-1}\int_{\mathbb S(\underline q_n)} q_1^{-k}\,d\mu(q)
    = M_q^{-1}\omega_{l-1}\int_{\underline q_n}^{1}q_1^{-k}(1-q_1^2)^{(l-3)/2}\,dq_1 \\
& \le M_q^{-1}\omega_{l-1}\int_{\underline q_n}^{1}q_1^{-k}\,dq_1
=\frac{\omega_{l-1}}{M_q\,(k-1)}\,(\underline q_n^{\,1-k}-1)
=O\left(\underline q_n^{\,1-k}\right),
  \end{split}
\end{align}
where $\omega_{l-1}$ is the surface area of $\mathbb S^{l-2}$, the first equality follows from Equation~(1.5.4) in \citet{daixu2013approximation}, the second inequality uses $(1-q_1^2)^{(l-3)/2}\le1$ on $[0,1]$ as $l>3$, and the last equality uses $k>1$.
Then,
\[
\sup_{(c,s)\in \mathcal S} |I_2| = O_p\!\left(\underline q_n^{(1-k)/2}\nu^{-l+1/2}\sqrt{\ln n/n}\right).
\]
Balance the rates of $I_2$ and $I_{3,2}$ by setting $\underline q_n^{(1-k)/2}\nu^{-l+1/2}\sqrt{\ln n/n} \asymp \nu^{m-l/2-\varsigma}$; this yields
$\nu \asymp (\underline q_n^{1-k} \ln n /n)^{1/(2m+l-1-2\varsigma)}$ (\Cref{assn:K_nu}\eqref{assn:K_nu:bw}).
Substituting back and applying the triangle inequality,
\(\sup_{(c,s)\in\mathcal S}\lvert I_2 + I_{3,2}\rvert = O_p(r_{1n}).\)
By Lemma S.3 and \eqref{eq:xi_bound},
$$\sup_{(c,s)\in\mathcal S}|I_1|=O_p(\nu^{-l+1}\Xi_{q,n}r_{2n})=O_p(\underline q_n^{\,1-k}r_{2n}\,
r_{1n}^{-2(l-1)/(2m-l-2\varsigma)}).$$

Combining these with the bound on the trimming bias $I_{3,1}$,
\begin{equation*}
  \sup_{(c,s)\in\mathcal S}\left|\wh f_{C_i,S_i}-f_{C_i,S_i}\right|=O_p\left(r_{1n}+\underline q_n^{1/2}+\underline q_n^{\,1-k}r_{2n}\,
r_{1n}^{-2(l-1)/(2m-l-2\varsigma)}\right).
\end{equation*}
\end{proof}

\medskip

\begin{proof}[Proof of \Cref{prop:margin}]
To simplify notation, we suppress $d$ in $C_i(d), S_i(d)$ and $J_i(d)$ throughout this proof.
Recall $\mathcal C_n\equiv[-\bar c_n,\bar c_n]$ with $\bar c_n \to \infty$ as $n \to \infty$.
The estimator $\wh f_{\gamma_i,\beta_i}(\gamma,b)$ integrates over $\mathcal C_n$, whereas $f_{\gamma_i,\beta_i}(\gamma,b)$ integrates over the support of $C_i$ conditional on $(\gamma,b)$.
By the triangle inequality,
\begin{align}\label{eq:prop3_decom}
    \begin{split}
      \sup_{(\gamma,b)\in\mathcal T}\left|\wh f_{\gamma_i,\beta_i}(\gamma,b)-f_{\gamma_i,\beta_i}(\gamma,b)\right|
& \le\underbrace{\sup_{(\gamma,b)\in\mathcal T}\int_{\mathcal C_n}\left|\wh f_{C_i,\gamma_i,\beta_i}(c,\gamma,b)-f_{C_i,\gamma_i,\beta_i}(c,\gamma,b)\right|dc}_{T_1}\\
&+\underbrace{\sup_{(\gamma,b)\in\mathcal T}\int_{|c|>\bar c_n}f_{C_i,\gamma_i,\beta_i}(c,\gamma,b)\,dc}_{T_2}.
    \end{split}
\end{align}

\emph{For $T_1$}, we have
\begin{align*}
T_1
&\le 2 \bar c_n\,\sup_{(c,\gamma,b) \in \mathcal C_n \times \mathcal T} \left|\wh f_{C_i,\gamma_i,\beta_i}(c,\gamma,b) - f_{C_i,\gamma_i,\beta_i}(c,\gamma,b) \right| \\
&=2 \bar c_n\,\sup_{(c,\gamma,b) \in \mathcal C_n \times \mathcal T} \left\{\left|\wh f_{C_i,S_i}(c,\tilde s)- f_{C_i,S_i}(c,\tilde s)\right|\,|J_i| \right\} \\
&\le 2 \bar c_n\,\sup_{(\gamma,b)\in\mathcal T} |J_i|\,\sup_{(c,\gamma,b) \in \mathcal C_n \times \mathcal T} \left|\wh f_{C_i,S_i}(c,\tilde s)- f_{C_i,S_i}(c,\tilde s)\right|,
\end{align*}
where $\tilde s \equiv (b_1(d_1+d_2\gamma), b_2d_2, d_2\delta_1\gamma )$,  the second step holds under \Cref{assn:rc}\eqref{assn:rc:jac}, and  $\sup_{(\gamma,b)\in\mathcal T}|J_i|<\infty$ because $|J_i|$ is continuous on the compact set $\mathcal T$. For any \((c,\gamma,b)\in\mathcal C_n \times \mathcal T\), \((c,\tilde s)\) belongs to the set \(\mathcal S_n \equiv \mathcal C_n \times \{\tilde s: (\gamma,b)\in\mathcal T\},\) where
the second component is compact because $\tilde s$ is continuous on $\mathcal T$.

\Cref{lemma:rte} establishes the uniform convergence rate of $\wh f_{C_i,S_i}(c,s)$ over a compact set $\mathcal S$; its proof applies to $\mathcal S_n$ because the set enters the proof (via Lemmas S.2 and S.3) only through the number of cubes that cover $\mathcal S_n$,
$N_n\asymp|\mathcal S_n|\,\ell_n^{-l},$
where $|\mathcal S_n|$ denotes the Lebesgue measure of $\mathcal S_n$ and $\ell_n$ is the side length of the cubes.
By Assumption~\ref{assn:tuning}, \(\bar c_n\asymp r_n^{-1/(p+1)}\) and the restriction on \(\underline q_n\) imply
\(\bar c_n=O(n^C)\) for some finite constant \(C>0\), so $N_n =o(n^{2l+C}(\ln n)^{-2l})$.
Thus, by an argument similar to that in the proof of
Lemma S.2, applying the Borel--Cantelli lemma
and a summability condition, the rate of \Cref{lemma:rte} holds uniformly over $\mathcal S_n$. Hence
\begin{align}\label{eq:prop3_bound}
  T_1=O_p\left(\bar c_n\,r_n\right).
\end{align}

\emph{For $T_2$}, write
\begin{align*}
  T_2 &=
   \sup_{(\gamma,b)\in\mathcal T} f_{\gamma_i,\beta_i}(\gamma,b) \,\int_{|c|>\bar c_n}f_{C_i \mid \gamma_i,\beta_i}(c,\gamma,b)\,dc  \\
  &\le  \bar M \sup_{(\gamma,b)\in\mathcal T} \Pr(|C_i|>\bar c_n \mid \gamma_i=\gamma,\beta_i=b)
  \le \bar M \,{\bar c_n^{\,-p}}{\sup_{(\gamma,b)\in\mathcal T}E\!\left(|C_i|^{p}\mid\gamma_i=\gamma,\beta_i=b\right)},
\end{align*}
where $\bar M\equiv\sup_{(\gamma,b)\in\mathcal T}f_{\gamma_i,\beta_i}(\gamma,b)<\infty$ by \Cref{assn:rc}\eqref{assn:rc:bound} and the last inequality follows from Markov's inequality with moment order $p$ of \Cref{assn:f_u}\eqref{assn:f_u:tail}.
Since $C_i \mid \gamma,b = (d_1\gamma+d_2\gamma^2)y_0 + (d_1+d_2\gamma)U_{i1} + d_2 U_{i2}$ and $p>1$, the $c_r$-inequality gives
\begin{equation}\label{eq:mean_C}
    E(|C_i|^{p}\mid\gamma,b) \le 3^{p-1}\Bigl(|(d_1\gamma+d_2\gamma^2)y_0|^{p} + |d_1+d_2\gamma|^{p}\,E(|U_{i1}|^{p}\mid\gamma,b) + |d_2|^{p}\,E(|U_{i2}|^{p}\mid\gamma,b)\Bigr),
\end{equation}
where $(d_1\gamma+d_2\gamma^2)y_0$ and $d_1+d_2\gamma$ are continuous in $\gamma$ and hence bounded on the compact set $\mathcal T$.
Together with \Cref{assn:f_u}\eqref{assn:f_u:tail}, the bound in \eqref{eq:mean_C} yields
$\sup_{(\gamma,b)\in\mathcal T}E(|C_i|^{p}\mid\gamma,b)<\infty$.
Then, we have
\begin{align}\label{eq:prop3_tail}
  T_2=O(\bar c_n^{\,-p}).
\end{align}
Combining \eqref{eq:prop3_bound} and \eqref{eq:prop3_tail},
$
\sup_{(\gamma,b)\in\mathcal T}|\wh f_{\gamma_i,\beta_i}-f_{\gamma_i,\beta_i}|
=O_p\left(\bar c_n r_n\right)+O(\bar c_n^{\,-p}).$
Setting $\bar c_n\asymp r_n^{-1/(p+1)}$ balances the two terms. Then
\begin{equation*}
\sup_{(\gamma,b)\in\mathcal T}|\wh f_{\gamma_i,\beta_i}-f_{\gamma_i,\beta_i}|=O_p\left(r_n^{\,p/(p+1)}\right),
\end{equation*}
where $r_n \to 0$ under \Cref{assn:tuning}.
\end{proof}

\medskip

\begin{proof}[Proof of \Cref{theorem:u}]
The proof is an adaptation of Theorem S2 in \cite{masten2018random}.
By the construction in Section~\ref{subsec:est_U},
\begin{align*}
&\wh f_{U_{i1},U_{i2}|\gamma_i = \gamma, \beta_i=b}(u_1,u_2) - f_{U_{i1},U_{i2}|\gamma_i = \gamma, \beta_i=b}(u_1,u_2)\\
&= \mathrm{Re}\!\left[\frac{1}{(2\pi)^2}\left(\int_{\mathcal{A}_n}\exp[-i(a_1u_1+a_2u_2)]\,\wh\phi(a)\,da - \int_{\mathbb R^2}\exp[-i(a_1u_1+a_2u_2)]\,\phi(a)\,da\right)\right],
\end{align*}
where $\phi(a) \equiv \phi_{U_{i1},U_{i2}|\gamma_i = \gamma, \beta_i=b}(a)$ is the conditional characteristic function of $U_i$ and $\wh\phi$ its estimator.
Using $|\mathrm{Re}(z)|\le|z|$ and applying the triangle inequality to complex-valued integrands, we have
\begin{align*}
&\left|\wh f_{U_{i1},U_{i2}|\gamma_i = \gamma, \beta_i=b}(u_1,u_2) - f_{U_{i1},U_{i2}|\gamma_i = \gamma, \beta_i=b}(u_1,u_2)\right| \\
&\le  \frac{1}{(2\pi)^2}\left|\int_{\mathcal{A}_n}\exp[-i(a_1u_1+a_2u_2)]\,\wh\phi(a)\,da - \int_{\mathbb R^2}\exp[-i(a_1u_1+a_2u_2)]\,\phi(a)\,da\right|\\
&\le  \underbrace{\frac{1}{(2\pi)^2}\int_{\mathcal{A}_n}|\wh\phi(a) - \phi(a)|\,da}_{R_1(\bar{\vartheta}_n,\underline{\vartheta}_n)}
+ \underbrace{\frac{1}{(2\pi)^2}\int_{[-\bar{\vartheta}_n,\bar{\vartheta}_n]^2\setminus\mathcal{A}_n}|\phi(a)|\,da}_{R_2(\underline{\vartheta}_n) }
+ \underbrace{\frac{1}{(2\pi)^2}\int_{\mathbb R^2\setminus[-\bar{\vartheta}_n,\bar{\vartheta}_n]^2}|\phi(a)|\,da}_{R_3(\bar{\vartheta}_n)},
\end{align*}
where the second inequality follows from $|\exp[-i(a_1u_1+a_2u_2)]| = 1$ for any $(a,u)$.

\textbf{\emph{Bound on $R_1$.}}
Recall that $\wh\phi(a) = \int_{\mathcal U_n}\exp(i\tilde u)\,\wh f_{\wt U_i(a)|\gamma_i =\gamma, \beta_i=b}(\tilde u)\,d\tilde u,$
while the true characteristic function $\phi(a)$ integrates $f_{\wt U_i(a)|\gamma_i =\gamma, \beta_i=b}(\tilde u)$ over $\mathbb R$.
 Let $C_i(a,\gamma) \equiv \wt U_i(a)+a_1\gamma y_0$, and $\mathcal U_n \equiv\{\tilde u \in \mathbb R: |\tilde u + a_1\gamma y_0| \leq \bar c_n\}$.
By the triangle inequality and $|\exp(i\tilde u)|=1$,
\begin{align*}
|\wh\phi(a) - \phi(a)|
&\le \int_{\mathcal U_n}\left|\wh f_{\wt U_i(a)|\gamma,b}(\tilde u)
- f_{\wt U_i(a)|\gamma,b}(\tilde u)\right|\,d\tilde u
+ \int_{\mathbb R \setminus \mathcal U_n}
f_{\wt U_i(a)|\gamma,b}(\tilde u)\,d\tilde u\\
& =\int_{\mathcal C_n}\big|\wh f_{C_i(a,\gamma)\mid\gamma,b}(c)
- f_{C_i(a,\gamma)\mid\gamma,b}(c)\big|\,dc
+ \int_{\mathbb R \setminus \mathcal C_n}
f_{C_i(a,\gamma)\mid\gamma,b}(c)\,dc\\
& \le 2\bar c_n \sup_{c\in\mathcal C_n} \big|\wh f_{C_i(a,\gamma)\mid\gamma,b}(c)
- f_{C_i(a,\gamma)\mid\gamma,b}(c)\big|
+ \Pr\!\left(|C_i(a,\gamma)|>\bar c_n \mid \gamma, b\right) \\
&\le 2\bar c_n \sup_{c\in\mathcal C_n} \big|\wh f_{C_i(a,\gamma)\mid\gamma,b}(c)
- f_{C_i(a,\gamma)\mid\gamma,b}(c)\big|
+\bar c_n^{\,-p}\,E\!\left(|C_i(a,\gamma)|^p \mid \gamma,b\right),
\end{align*}
where the equality follows from
the change of variables $c=\tilde u+a_1\gamma y_0$, which maps
$\mathcal U_n$ onto $\mathcal C_n$, the second inequality uses that
the Lebesgue measure of $\mathcal C_n$ is $2\bar c_n$, and the last inequality
applies Markov's inequality at moment order $p$ of
\Cref{assn:f_u}\eqref{assn:f_u:tail}.
By \eqref{eq:mean_C} in the proof of \Cref{prop:margin}, for $a\in\mathcal A_n$ and $p> 1$,
$$
E(|C_i(a,\gamma)|^{p}\mid\gamma,b) \le 3^{p-1}\Bigl(|a_1\gamma y_0|^{p} + |a_1|^{p}\,E(|U_{i1}|^{p}\mid\gamma,b) + |a_2|^{p}\,E(|U_{i2}|^{p}\mid\gamma,b)\Bigr)=O(\bar \vartheta_n^{p}),$$
where the equality uses $|a_1|, |a_2|\le\bar{\vartheta}_n$ on $\mathcal A_n$. Then,
\[
\sup_{a\in\mathcal A_n}|\wh\phi(a) - \phi(a)|  \leq 2\bar c_n\sup_{a\in\mathcal A_n}\sup_{c\in\mathcal C_n} \big|\wh f_{C_i(a,\gamma)\mid\gamma,b}(c)
- f_{C_i(a,\gamma)\mid\gamma,b}(c)\big|  + O(\bar c_n^{\,-p} \bar{\vartheta}_n^{p}).
\]
Using $|\mathcal A_n|\le 4\bar{\vartheta}_n^2$, Lemma A.2, and $\bar c_n\asymp r_n^{-1/(p+1)}$, we have
\begin{align*}
R_1
= O_p\!\left(\bar{\vartheta}_n^{2}\bar c_n\,\varepsilon_n^{*}\right)
+ O\!\left(\bar c_n^{\,-p}\bar{\vartheta}_n^{p+2}\right)= O_p\!\left(\bar{\vartheta}_n^{2}r_n^{-1/(p+1)}\varepsilon_n^{*}\right)
+ O\!\left(\bar{\vartheta}_n^{p+2}r_n^{\,p/(p+1)}\right),
\end{align*}
where $\varepsilon_n^{*}$ denotes the uniform convergence
rate of the estimator $\wh f_{C_i(a,\gamma) \mid \gamma, \beta}$ over
$\mathcal A_n$, as defined in Assumption S.2.

\textbf{\emph{Bound on $R_2$.}} The region $[-\bar{\vartheta}_n,\bar{\vartheta}_n]^2\setminus\mathcal{A}_n$ lies in $\{a:|a_1|<\underline{\vartheta}_n\}\cup\{a:|a_2|<\underline{\vartheta}_n\}$.
Since $\|a\|\ge|a_2|$, \Cref{assn:f_u}\eqref{assn:f_u:char} gives $|\phi(a)|\le M_\phi(1+\|a\|)^{-\rho}\le M_\phi(1+|a_2|)^{-\rho}$ for $\rho >2$.
Thus, integrating $|\phi(a)|$ over the set $\{a: |a_1|<\underline{\vartheta}_n\}$ yields
\begin{align*}
 \frac{1}{(2\pi)^2}\int_{|a_1|<\underline{\vartheta}_n}\!\int_{\mathbb R}|\phi(a)|\,da_2\,da_1
 \le \frac{2M_\phi\,\underline{\vartheta}_n}{(2\pi)^2}\int_{\mathbb R}(1+|a_2|)^{-\rho}\,da_2 = \frac{M_\phi\,\underline{\vartheta}_n}{\pi^2(\rho-1)} = O(\underline{\vartheta}_n),
\end{align*}
and the set $\{a:|a_2|<\underline{\vartheta}_n\}$ is bounded symmetrically using $\|a\|\ge|a_1|$. Hence
$R_2 = O(\underline{\vartheta}_n).$

\textbf{\emph{Bound on $R_3$.}} Since $\mathbb R^2\setminus[-\bar{\vartheta}_n,\bar{\vartheta}_n]^2\subseteq\{a:\|a\|>\bar{\vartheta}_n\}$, under \Cref{assn:f_u}\eqref{assn:f_u:char},
\begin{align*}
R_3 &\le \frac{M_\phi}{(2\pi)^2}\int_{\|a\|>\bar{\vartheta}_n}(1+\|a\|)^{-\rho}\,da
 = \frac{M_\phi}{(2\pi)^2}\int_{0}^{2\pi}\!\!\int_{\bar{\vartheta}_n}^{\infty}(1+t)^{-\rho}\,t\,dt\,d\theta
\le \frac{M_\phi}{2\pi}\int_{\bar{\vartheta}_n}^{\infty}t^{1-\rho}\,dt \\
& = \frac{M_\phi\,\bar{\vartheta}_n^{\,-(\rho-2)}}{2\pi(\rho-2)}
= O\!\left(\bar{\vartheta}_n^{-(\rho-2)}\right),
\end{align*}
where the first equality uses polar coordinates $a=(t\cos\theta,t\sin\theta)$, the second inequality uses $\int_{0}^{2\pi}d\theta=2\pi$ and $(1+t)^{-\rho}\,t\le t^{1-\rho}$ for $t \ge \bar \vartheta_n>0$,
and the second equality holds for $\rho > 2$.
Because the bounds on $R_1$, $R_2$, and $R_3$ do not depend on $(u_1,u_2)$,
\begin{align*}
&\sup_{(u_1,u_2)\in\mathbb R^2}\left|\wh f_{U_{i1},U_{i2}|\gamma_i =\gamma,\beta_i=b}(u_1,u_2) - f_{U_{i1},U_{i2}|\gamma_i =\gamma,\beta_i=b}(u_1,u_2)\right| \\
&=O_p\!\left(\bar{\vartheta}_n^{2}r_n^{-1/(p+1)}\varepsilon_n^{*}\right)
+ O\!\left(\bar{\vartheta}_n^{p+2}r_n^{\,p/(p+1)}\right)
+ O\!\left(\underline{\vartheta}_n\right) + O\!\left(\bar{\vartheta}_n^{-(\rho-2)}\right)=o_p(1),
\end{align*}
where $\rho>2$, the first two terms vanish under Assumption S.2, and the last two vanish because $\underline{\vartheta}_n\to0$ and $\bar{\vartheta}_n\to\infty$.
\end{proof}

\section{Extension: Long Horizons with \texorpdfstring{$T\ge 3$}{Tge3}}\label{sec:T3}

Our identification extends to panels with longer horizons ($T\ge 3$). To fix ideas, consider a model with only strictly exogenous covariates $X_{it}$. Suppose
\[
Y_{it}=\gamma_iY_{it-1}+X_{it}'\beta_{it}+U_{it},
\qquad t=1,\ldots,T,
\]
and let $H_i\equiv(X_{i1}',\ldots,X_{iT}')'$. Recursive substitution gives
\[
\sum_{t=1}^Td_tY_{it}
=
\underset{C_i(d)}{\underbrace{\Bigl(\sum_{t=1}^Td_t\gamma_i^t\Bigr)Y_{i0}
+
\sum_{s=1}^Tp_s(\gamma_i;d)U_{is}}}
+
\sum_{s=1}^TX_{is}'\,\underset{S_{is}(d)}{\underbrace{p_s(\gamma_i;d)\,\beta_{is}}},
\]
where $p_s(\gamma;d)\equiv\sum_{t=s}^Td_t\gamma^{t-s}$. This reduces to \eqref{eq:ty} without the $W_{i1}$ term when $T=2$.

\subsection{Time-invariant random slopes}

Suppose that the random coefficients for strictly exogenous covariates are time-invariant ($\beta_{it}=\beta_i\in\mathbb R^{J_x}$ for all $t=1,\ldots,T$). Then
\[
S_{i,T-1}(d)
=
(d_{T-1}+d_T\gamma_i)\beta_i,
\qquad
S_{iT}(d)
=
d_T\beta_i.
\]
In addition to the analogs of \Cref{assn:indep} and \ref{assn:richSupp} maintained throughout, we assume $\Pr\{\beta_{i,k} = 0 \mid Y_{i0} \} = 0$ for some $k$.
For any $d$ with $d_T\neq0$ and any $k$ with $\Pr\{\beta_{i,k}=0\mid Y_{i0}\}=0$, the map $(\gamma_i,\beta_i)\mapsto(S_{i,T-1,k}(d),S_{iT}(d))$ is invertible with Jacobian determinant $|d_T^{J_x+1}\beta_{i,k}|$.
The joint density of $(C_i(d),\gamma_i,\beta_i)$ given $Y_{i0}$ then follows from a change of variables similar to (\ref{eq:iden_CRC}), with the remaining slopes providing over-identifying restrictions.

It remains to recover the conditional distribution of time-varying errors $U_{it}$. For any $a \equiv (a_1,\ldots, a_T)$ and realized coefficient $\gamma_i = \gamma$, define
\[
d_s(a,\gamma) =a_s-\gamma a_{s+1} \text{ for }s=1,\ldots,T, \text{ with }
a_{T+1}\equiv0,
\]
so that $p_s(\gamma;d(a,\gamma))=a_s$. The characteristic function of
$(U_{i1},\ldots,U_{iT})$ conditional on
$(\gamma_i,\beta_i,Y_{i0})$ is then recovered exactly as in \eqref{eq:iden_tU}; continuity handles the boundary case $a_T=0$. The random coefficients and intercepts are thus identified under moment determinacy and support conditions as stated in the main text.
A sufficient condition for the moment determinacy condition at $T\ge 3$ is that, conditional on $Y_{i0}$, $\gamma_i$ has bounded support while the components of $(\beta_i,U_i)$ have sub-Gaussian (or merely sub-exponential) tails.

\subsection{Time-varying random slopes}

If $\beta_{it}$ are time-varying, then for any fixed $d$, the mapping from $(\gamma_i,\beta_{i1},\ldots,\beta_{iT})\in\mathbb R^{TJ_x+1}$ to $(S_{i1}(d),\ldots,S_{iT}(d))\in\mathbb R^{TJ_x}$ is not invertible.
Nevertheless, varying $d$ jointly with the direction in which $(S_{i1}(d),\ldots,S_{iT}(d))$ is evaluated helps to identify the full joint distribution of $(\gamma_i,\beta_{i1},\ldots,\beta_{iT})$.
We maintain the following conditions, which are analogs of \Cref{assn:indep} and \ref{assn:richSupp}. \medskip

\begin{assumption}\label{ass:B1}\leavevmode
\begin{enumerate}[(i)]
    \item\label{ass:B1:i} $(\gamma_i,\beta_i,U_i)\perp H_i\mid Y_{i0}$, and $(\gamma_i,\beta_i,U_i)\mid Y_{i0}$ admits a Lebesgue density almost
    surely.
    \item\label{ass:B1:ii} The support of $H_i\mid Y_{i0}$ contains an open ball in $\mathbb{R}^{TJ_x}$ almost surely;
    and for every $d\in\mathbb{R}^T$, the conditional distribution of $(C_i(d),S_{i1}(d),\ldots,S_{iT}(d))$ given $Y_{i0}$ is uniquely determined by its moments and has finite absolute moments of all orders almost surely.
\end{enumerate}
\end{assumption}

\medskip

Under Assumption~\ref{ass:B1}, the joint distribution of $(C_i(d),S_{i1}(d),\ldots,S_{iT}(d))$ given $Y_{i0}$ is identified from that of $\bigl(\sum_{t=1}^Td_tY_{it},\,H_i'\bigr)$ given $Y_{i0}$ for every $d\in\mathbb{R}^T$.
This is the general-$T$ analog of \Cref{lm:one} and follows from Lemma 2 of \citet{masten2018random} by the same argument, with $H_i\in\mathbb{R}^{TJ_x}$ playing the role of the regressors. \medskip

\begin{proposition}\label{prop:B1}
Suppose Assumption~\ref{ass:B1} holds. Then, for any fixed $T\ge2$ and
$J_x\ge1$, the joint distribution of
$(\gamma_i,\beta_{i1},\ldots,\beta_{iT})$ given $Y_{i0}$ is identified.
\end{proposition}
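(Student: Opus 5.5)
By the general-$T$ analog of \Cref{lm:one}, the joint law of $(C_i(d),S_{i1}(d),\ldots,S_{iT}(d))$ given $Y_{i0}$ is identified for every $d\in\mathbb R^T$. The plan is to drop $C_i(d)$ and extract the law of $(\gamma_i,\beta_i)$ from the family of laws of $S(d)\equiv(S_{i1}(d),\ldots,S_{iT}(d))$, indexed by $d$. I will do this through characteristic functions.

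For $\lambda=(\lambda_1',\ldots,\lambda_T')\in\mathbb R^{TJ_x}$ we have
\[
\lambda'S(d)=\sum_{s=1}^T p_s(\gamma_i;d)\,\lambda_s'\beta_{is},\qquad p_s(\gamma;d)=\sum_{t=s}^T d_t\gamma^{t-s}.
\]
Hence we know the law of $\sum_s p_s(\gamma_i;d)\lambda_s'\beta_{is}$ given $Y_{i0}$ for all $(d,\lambda)$.

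The first step is to recover $\gamma_i$ jointly with the slopes. Choose $d(a,g)$ as in the $U$-step: $d_s=a_s-g a_{s+1}$ with $a_{T+1}=0$, so that $p_s(g;d)=a_s$ at a chosen point $g$. For a general $\gamma$, telescoping gives $p_s(\gamma;d)=a_s+(\gamma-g)\sum_{t>s}a_t\gamma^{t-s-1}$. This is a polynomial in $\gamma$ with coefficients linear in $d$.

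Rather than evaluate at a point, I will use the polynomial structure directly. Since $\lambda'S(d)$ is linear in $d$,
\[
\lambda'S(d)=\sum_{t=1}^T d_t\,\sum_{s\le t}\gamma_i^{t-s}\lambda_s'\beta_{is}\equiv\sum_t d_t\,G_t,\qquad G_t\equiv\sum_{s\le t}\gamma_i^{t-s}\lambda_s'\beta_{is}.
\]
Knowing the law of $d'G$ for all $d\in\mathbb R^T$ identifies, by Cram\'er--Wold, the joint law of the vector $(G_1,\ldots,G_T)$ for each fixed $\lambda$.

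The key device is to take $\lambda=(\lambda_1,0,\ldots,0)$. Then $G_t=\gamma_i^{t-1}\lambda_1'\beta_{i1}$, so we obtain the joint law of $(\lambda_1'\beta_{i1},\gamma_i\lambda_1'\beta_{i1})$. Because $\beta_{i1}$ has a density given $Y_{i0}$, $\lambda_1'\beta_{i1}\neq0$ almost surely for $\lambda_1\neq0$, so the ratio identifies the law of $(\gamma_i,\lambda_1'\beta_{i1})$.

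To get everything jointly, I will use general $\lambda$ together with pairs of components. The law of $(G_1,\ldots,G_T)$ gives the joint law of
\[
\bigl(\lambda_1'\beta_{i1},\ \gamma_i\lambda_1'\beta_{i1}+\lambda_2'\beta_{i2},\ \ldots\bigr).
\]
That is an invertible triangular, but $\gamma$-dependent, map of $(\lambda_1'\beta_{i1},\ldots,\lambda_T'\beta_{iT})$ only when $\gamma_i$ is already known.

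So the route instead uses the scaling $\lambda\mapsto(\lambda_1,\ldots,\lambda_T)$ with the extra vector $\lambda^0=(\lambda_1,0,\ldots,0)$ and linear combinations. Since the law of $d'G(\lambda)$ is known for all $(d,\lambda)$, the law of the multilinear family $\{d'G(\lambda)\}$ is known. Each $d'G(\lambda)=\sum_t d_t\sum_{s\le t}\gamma^{t-s}\lambda_s'\beta_{is}$ is a polynomial in the finite vector $\theta=(\gamma_i,\beta_i)$.

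I will compute the joint mixed moments $E[(\gamma_i\lambda_1'\beta_{i1})^{k}\prod_s(\lambda_s'\beta_{is})^{m_s}\mid Y_{i0}]$. These come from expanding powers of $d'G(\lambda)$ in $(d,\lambda)$ and reading off polynomial coefficients, which is legitimate by the finite-moment condition in Assumption~\ref{ass:B1}(ii). For example, the coefficient of $d_1^{m_1}d_2^{k}$ in $E[(d'G)^{N}]$ with $\lambda=(\lambda_1,\lambda_2,0,\ldots)$ identifies the moments of $(\lambda_1'\beta_{i1},\gamma_i\lambda_1'\beta_{i1}+\lambda_2'\beta_{i2})$. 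Separating the $\lambda_2$-homogeneous parts by varying the scale of $\lambda_2$ then isolates $E[(\lambda_1'\beta_{i1})^{a}(\gamma_i\lambda_1'\beta_{i1})^{b}(\lambda_2'\beta_{i2})^{c}]$. Iterating over $s$ gives all moments of $(\lambda_1'\beta_{i1},\gamma_i\lambda_1'\beta_{i1},\lambda_2'\beta_{i2},\ldots,\lambda_T'\beta_{iT})$, and by Cram\'er--Wold in $\lambda$, all moments of $(\beta_{i1},\gamma_i\beta_{i1},\beta_{i2},\ldots,\beta_{iT})$.

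With moment determinacy of this vector (inherited from that of $S(d)$), its law is identified. The map $(\gamma,\beta)\mapsto(\beta_1,\gamma\beta_1,\beta_2,\ldots)$ is injective whenever $\beta_{i1,k}\neq0$, which holds almost surely by the density assumption. The law of $(\gamma_i,\beta_i)$ given $Y_{i0}$ then follows by taking $\gamma_i=(\gamma_i\beta_{i1,k})/\beta_{i1,k}$.

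The main obstacle is the step that isolates the separate homogeneous components. Polynomial coefficients in $(d,\lambda)$ of the moments of $d'G(\lambda)$ must disentangle $\gamma_i\lambda_1'\beta_{i1}$ from $\lambda_2'\beta_{i2}$, since they always appear summed in $G_2$. The natural way to do this is to scale $\lambda_1$ and $\lambda_2$ independently, giving homogeneity degrees $(a+b,c)$. However, $(\lambda_1'\beta_{i1})^a(\gamma_i\lambda_1'\beta_{i1})^b$ carries $\lambda_1$-degree $a+b$ regardless of the split, and it is the $d$-degrees that separate $a$ from $b$ (the exponent of $d_1$ versus $d_2$). So I would track the bidegree in $(d_1,d_2,\ldots)$ and in the scalings of each $\lambda_s$ simultaneously, verifying that the resulting linear system for the mixed moments is triangular.

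Should the moment route become unwieldy, the fallback is to evaluate characteristic functions along $d$ and $\lambda$ analytically. This amounts to Cram\'er--Wold applied to the identified law of $(G_t(\lambda))_t$, viewed jointly over finitely many $\lambda$'s through linearity in $\lambda$.
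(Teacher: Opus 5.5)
\textbf{The determinacy step is a genuine gap.} Your reduction to the law of $(G_1,\ldots,G_T)$ for each fixed direction is exactly the paper's starting point. The bookkeeping you flag as the main obstacle is actually harmless. Write $V_s\equiv\lambda_s'\beta_{is}$ and rescale each block $\lambda_s$ by an independent scalar $\mu_s$. Then the coefficient of $\prod_t d_t^{e_t}\prod_s\mu_s^{f_s}$ in $E[(d'G)^N]$ is a positive multiple of the single moment $E[\gamma_i^{k}\prod_s V_s^{f_s}]$, with $k=\sum_t t\,e_t-\sum_s s\,f_s$. So all moments of $Z\equiv(\beta_{i1},\gamma_i\beta_{i1},\beta_{i2},\ldots,\beta_{iT})$ are indeed identified.

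The problem is the next step, where you claim that moment determinacy of $Z$ is inherited from $S_i(d)$. Assumption~\ref{ass:B1}(ii) gives determinacy of $(C_i(d),S_i(d))$ only for each fixed $d$. The coordinates of $Z$, however, come from different $d$'s: $\beta_{i1}$ from $d=e_1$; $\gamma_i\beta_{i1}$ and $\beta_{i2}$ from $d=e_2$; $\beta_{is}$ from $d=e_s$. You also cannot route this through Petersen's criterion (determinate coordinate marginals imply a determinate joint law). Determinacy of a joint law does not pass to its one-dimensional projections: $(X,\log X)$ with $X$ lognormal is moment-determinate although $X$ is not. So passing from the identified moments of $Z$ to its law is unsupported under the stated assumptions.

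\textbf{Your fallback does not close the gap; the paper's device does.} The moment detour is avoidable, since you already have full laws, but your characteristic-function fallback is too vague. For $T\ge3$, the c.f.\ of your target $(V_1,\gamma_iV_1,V_2,\ldots,V_T)$ is not of the form $E\exp\{i\lambda'S_i(d)\}$ for any $(\lambda,d)$. A nonzero weight on $V_T$ forces $d_T\neq0$, which puts a nonzero weight on $\gamma_i^{T-1}V_1$ unless $V_1$ drops out entirely. Linearity in $\lambda$ also gives the joint c.f.\ of $G(\lambda^{(1)})$ and $G(\lambda^{(2)})$ only at proportional arguments.

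The missing idea is to target a different vector, as the paper does:
\begin{enumerate}[(i)]
\item Fix $t$ with all $t_s\neq0$, set $V_{is}\equiv t_s'\beta_{is}$, and rescale only the last block: $\tilde t(\lambda)=(t_1',\ldots,t_{T-1}',\lambda t_T')'$.
\item Then $\tilde t(\lambda)'S_i(d)=\sum_{r<T}d_rW_r+d_T(\gamma_iW_{T-1}+\lambda V_{iT})$ with $W_r=\sum_{s\le r}\gamma_i^{r-s}V_{is}$. Hence $E\exp\{i\tilde t(\lambda)'S_i(d)\}$ is the c.f.\ of $\Xi\equiv(W_1,\ldots,W_{T-1},\gamma_iW_{T-1},V_{iT})$ evaluated at $(d_1,\ldots,d_T,d_T\lambda)$.
\item These points are dense in $\mathbb R^{T+1}$, so the law of $\Xi$ is identified by continuity and L\'evy uniqueness.
\item The map $(\gamma,v)\mapsto\Xi$ is injective on $\{w_{T-1}\neq0\}$, whose complement is null, with Jacobian $|w_{T-1}|$. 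This gives the density of $(\gamma_i,V_{i1},\ldots,V_{iT})$, and hence the c.f.\ $\Phi(g,t)$ of $(\gamma_i,\beta_i)$ for all $t$ in a dense set.
\end{enumerate}
This argument needs no determinacy beyond what is already used to identify the law of $S_i(d)$.
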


\begin{proof}[Proof of \Cref{prop:B1}]
The proof is conditional on $Y_{i0}$, which we suppress in notation. The target is the joint characteristic function (c.f.) of $(\gamma_i,\beta_{i1}',\ldots,\beta_{iT}')$:
    \[  \Phi(g,t) \equiv E\Bigl[\exp\Bigl\{i\Bigl(g\gamma_i +\sum\nolimits_{s=1}^Tt_s'\beta_{is}\Bigr)\Bigr\}\Bigr],
        \qquad g\in\mathbb{R},\qquad t=(t_1',\ldots,t_T')'\in\mathbb{R}^{TJ_x}. \]
Since $\Phi$ is uniformly continuous, it suffices to identify $\Phi(g,t)$ for every $g$ and every $t$ in the open dense set $\mathcal{T}_0\equiv\{t:t_s\ne0\text{ for all }s\}$.
Fix such $(g,t)$ for the remainder of the proof, and define the scalar variables $V_{is}(t_s)\equiv t_s'\beta_{is}$, $s=1,\ldots,T$.
By construction, $\Phi(g,t)$ is the c.f. of $(\,\gamma_i,V_{i1}(t_1),\ldots,V_{iT}(t_T)\,)$ evaluated at $(g,1,\ldots,1)$, so it
suffices to identify the joint density of $(\gamma_i,V_{i1}(t_1),\ldots,V_{iT}(t_T))$.\footnote{
    This density exists under \Cref{ass:B1}\eqref{ass:B1:i}, because $V(t)$ is a linear function of $(\gamma_i,\beta_{i1}',\ldots,\beta_{iT}')$.}

\noindent \textbf{(\textit{Step 1.})}
For the fixed vector $t\in\mathbb R^{TJ_x}$, and for any $\lambda\in\mathbb{R}$, we can construct a directional vector \[\tilde t(\lambda) \equiv (t_1',t_2',\ldots,t_{T-1}',\lambda t_T')',\] and evaluate the vector $(S_{i1}(d),\ldots,S_{iT}(d))$ in the direction $\tilde t(\lambda)$.
Recall that $S_{is}(d)=p_s(\gamma_i;d)\beta_{is}$ is a scalar multiple of $\beta_{is}$, and $p_T(\gamma;d)=d_T$.
Therefore,
\[ \tilde t_s(\lambda)'S_{is}(d) =p_s(\gamma_i;d)\;\tilde t_s(\lambda)'\beta_{is} =
\begin{cases}
p_s(\gamma_i;d)\,V_{is}(t_s), & s=1,\ldots,T-1,\\[2pt]
d_T\,\lambda V_{iT}(t_T), & s=T.
\end{cases} \]
To simplify notation, we suppress $t_s$ in $V_{is}(t_s)$ for $s=1,\ldots,T$ below.

Summing over $s$ and substituting
$p_s(\gamma;d)=\sum_{r=s}^Td_r\gamma^{r-s}$,
\begin{align*}
\sum_{s=1}^T \tilde t_s(\lambda)'S_{is}(d)
&=\sum_{s=1}^{T-1}\ \sum_{r=s}^{T}d_r\,\gamma_i^{\,r-s}V_{is}
\;+\;d_T\,\lambda V_{iT}
=\sum_{r=1}^{T}d_r\ \sum_{s=1}^{\min(r,\,T-1)}\gamma_i^{\,r-s}V_{is}
\;+\;d_T\,\lambda V_{iT}\\[2pt]
&=\sum_{r=1}^{T-1}d_r
\underbrace{\sum_{s=1}^{r}\gamma_i^{\,r-s}V_{is}}_{\equiv \;W_r}
\;+\;d_T
\underbrace{\sum_{s=1}^{T-1}\gamma_i^{\,T-s}V_{is}}_{\equiv \;\gamma_iW_{T-1}}
\;+\;d_T\,\lambda V_{iT},
\end{align*}
where the second equality interchanges the order of summation over
$\{(s,r):1\le s\le T-1,\ s\le r\le T\}
=\{(s,r):1\le r\le T,\ 1\le s\le\min(r,T-1)\}$, and the third separates the term $r=T$, whose inner sum is
$\sum_{s=1}^{T-1}\gamma_i^{\,T-s}V_{is} =\gamma_i\left(\sum_{s=1}^{T-1}\gamma_i^{\,T-1-s}V_{is}\right)\equiv\gamma_iW_{T-1}$.
Note $W_r(\cdot) $ is indexed by $(t_1,\ldots,t_r)$ through $(V_{i1},\ldots,V_{ir})$, which we suppress in notation.

Collecting terms,
\begin{equation}\label{eq:family}
    \sum\nolimits_{s=1}^T \tilde t_s(\lambda)'S_{is}(d) = \sum\nolimits_{r=1}^{T-1}d_r\,W_r + d_T\bigl(\gamma_iW_{T-1}+\lambda V_{iT}\bigr), \qquad d\in\mathbb{R}^T.
\end{equation}
As noted above, the joint distribution of $(S_{i1}(d),\ldots,S_{iT}(d))$ given $Y_{i0}$ is identified for all $d\in\mathbb R^T$ under \Cref{ass:B1}. Therefore, for any fixed $t\in\mathbb R^{TJ_x}$, the distribution of the left-hand side of \eqref{eq:family} is identified for every $(d,\lambda)\in\mathbb R^{T+1}$. (Recall that $\tilde t(\lambda)$ is constructed from the direction vector $t$ in the c.f. of interest, $\Phi(g,t)$.)

Hence, the quantity
\begin{equation} \label{eq:cf-Xi}
    E\Bigl[\exp\Bigl\{i\Bigl(\sum_{r=1}^{T-1} d_r W_r + d_T \gamma_i W_{T-1} + d_T \lambda V_{iT} \Bigr)\Bigr\}\Bigr] = \phi_{\,\Xi}(d_1,\ldots,d_{T-1},\,d_T,\,d_T\lambda)
\end{equation}
is identified for all $(d,\lambda)\in\mathbb{R}^{T+1}$, where $\phi_{\,\Xi}(\cdot)$ is the characteristic function of
\[ \Xi \equiv (W_1,W_2,\ldots,W_{T-1},\gamma_iW_{T-1},V_{iT})\in\mathbb R^{T+1}.\]
Note $\Xi$ is a random array indexed by the fixed vector $t\in\mathbb R^{TJ_x}$ alone, and \emph{not} by the vector $(d,\lambda)$; the latter only determines the direction argument entering $\phi_{\,\Xi}(d_1,\ldots,d_T,\lambda d_T)$.

By varying $d$ over the set $\{d\in\mathbb R^T: d_T \neq 0 \}$ and $\lambda$ over $\mathbb R$, the last argument $z\equiv d_T\lambda$ in \eqref{eq:cf-Xi} ranges over $\mathbb R$. So, we can use \eqref{eq:cf-Xi} to recover the c.f. of $\Xi$ over the set $\{ (d_1,\ldots,d_T,z)\in\mathbb R^{T+1}:d_T\neq 0 \} $, which is dense in $\mathbb R^{T+1}$.
Characteristic functions are uniformly continuous, so $\phi_{\,\Xi}$ is identified on $\mathbb{R}^{T+1}$.
By L\'evy's uniqueness theorem, this determines the distribution of $\Xi$.

\noindent \textbf{(\textit{Step 2.})}
Define $\psi(\gamma,v_1,\ldots,v_T)\equiv (w_1,\ldots,w_{T-1},\,\gamma w_{T-1},\,v_T)$ with $w_r\equiv\sum_{s=1}^r\gamma^{r-s}v_s$, so that $\psi(\gamma_i,V_{i1},\ldots,V_{iT})=\Xi$.
Over $\{w_{T-1}\ne0\}$, $\psi$ is injective: $\gamma$ is recovered as the ratio of the $T$-th to the $(T{-}1)$-th coordinate, and then $v_1=w_1$, $v_r=w_r-\gamma w_{r-1}$ for $r=2,\ldots,T-1$, while $v_T$ is the last coordinate.
Decompose the mapping $\psi$ as the composite of two: $\psi_2\circ\psi_1$, where $\psi_1(\gamma,v_1,\ldots,v_T)\equiv(\gamma,w_1,\ldots,w_{T-1},v_T)$ is triangular with unit Jacobian, and $\psi_2(\gamma,w_1,\ldots,w_{T-1},v_T)\equiv(w_1,\ldots,w_{T-1},\gamma w_{T-1},v_T)$ has Jacobian determinant $(-1)^{T+1}w_{T-1}$. Hence the Jacobian of $\psi$ satisfies $|\!\det \mathrm D\psi|=|w_{T-1}|$.
Under \Cref{ass:B1}\eqref{ass:B1:i}, $\Pr\{W_{T-1}=0\}=0$.
Therefore, for almost every
$(\gamma,v)\in\mathbb{R}^{T+1}$,
\begin{equation*}
    f_{\gamma_i,V_{i1},\ldots,V_{iT}}(\gamma,v_1,\ldots,v_T) = f_\Xi\bigl(w_1,\ldots,w_{T-1},\gamma w_{T-1},v_T\bigr) \,\bigl|w_{T-1}\bigr|,
\end{equation*}
where $ w_r=\sum_{s=1}^r\gamma^{r-s}v_s $. This identifies the joint distribution of $(\gamma_i,V_{i1},\ldots,V_{iT})$.

\noindent \textbf{(\textit{Step 3.})} By construction, for any fixed vector $(g,t)$,
    \[ \Phi(g,t) = \int_{\mathbb{R}^{T+1}} \exp\Bigl\{i\Bigl(g\gamma+\sum\nolimits_{s=1}^Tv_s\Bigr)\Bigr\} \,f_{\gamma_i,V_{i1},\ldots,V_{iT}}(\gamma,v)\,d\gamma\,dv, \]
where $t$ is the fixed, directional vector defining $(V_{i1},\ldots,V_{iT})$.
(Recall that such dependence is suppressed in notation.)
With the distribution of $ (\gamma_i,V_{i1},\ldots,V_{iT}) $ recovered in Step 2, $\Phi(g,t)$ is identified for every $g\in\mathbb{R}$ and $t\in\mathcal{T}_0$, and by continuity on all of $\mathbb{R}^{1+TJ_x}$.
Since $\Phi$ is the joint characteristic function of $(\gamma_i,\beta_{i1},\ldots,\beta_{iT})$, the conclusion follows.
\end{proof}

\vspace{.3cm}

\section{Relation to Existing Literature} \label{sec:Appendix_lit}

\citet{GrahamPowell2012} study a correlated random coefficient panel data model:
\[Y_t = \chi_t'\mathcal{B}_t, \ \text{where }E(\mathcal B_t\mid \chi^T) = \mathcal{B}_0(\chi^T) + \mathcal D_t, \]
where $\mathcal B_0(\cdot) $ is a deterministic function that does not vary over time and $\mathcal D_t$ is a vector of constants that may vary over $t=1,2,\ldots,T$.
Mapping their specification into the reduced form of our model with $T=2$, we have $\chi_1 \equiv (1,Y_0,X_1,W_1)'$, $\chi_2 \equiv (1,Y_0,X_1,X_2,W_1,W_2)'$ and $\mathcal B_1 \equiv (U_1,\gamma,\beta_1,\delta_1)'$, $\mathcal B_2 \equiv (\gamma U_1 + U_2, \gamma^2, \gamma\beta_1, \beta_2, \gamma\delta_1, \delta_2)'$.

Our model is not nested within \citet{GrahamPowell2012}, because the mean of the random intercept in $\mathcal B_2$, i.e., $E(\gamma U_1+U_2 \mid \chi^T)$, is not additive in any stationary function $\mathcal B_0(\chi^T)$ and a constant $\mathcal D_2$ due to the sequential exogeneity of $W_t$ in $\chi_t$.
    Besides, \citet{GrahamPowell2012} study the (ir)regular identification of average partial effects $E(\mathcal B_t)$ when the number of time periods is no smaller than the dimension of covariates, using cross-section units whose covariates change little over time.
In contrast, we assume the coefficients for sequentially exogenous covariates are constant, and are interested in recovering the joint distribution of the random coefficients and intercepts. Our method does not require the number of time periods to be strictly larger than the dimension of exogenous covariates.

\citet{MastenTorgovitsky2016} study the identification of correlated random coefficient models in the following form:
\begin{equation} \label{eq:MT-restat2016}
    Y^*=B^*_0+B^*_1X^*+B^*_2Z_2^*, \qquad X^*=h(Z^*,V^*),
\end{equation}
where $h(\cdot)$ is an unknown function with a \textit{scalar} noise $V^*$, and $Z^*\equiv(Z^*_1,Z^*_2)$ is a vector of instruments.
The identifying conditions include:
\begin{align*}
         (a)  \ Z^* \perp (V^*,B^*)\text{, where } B^*\equiv(B^*_0,B^*_1,B^*_2); \
     (b)  \  h(Z^*,V^*) \text{ is increasing in }V^*\in\mathbb R.
\end{align*}
For simplicity, let $Y^*,X^*,Z^*_2$ all be scalars, and suppress the individual subscript $i$ below.
To fit our dynamic panel data model with random coefficients into the model of \citet{MastenTorgovitsky2016} in (\ref{eq:MT-restat2016}), one may relate the variables in our model to those in \citet{MastenTorgovitsky2016} as follows:
\[Y^*\equiv Y_2, \ X^*\equiv Y_1, \ Z^*_2\equiv X_2, \ Z^*_1 \equiv X_1=; \ B^*_1=\gamma, \ B^*_2 \equiv \beta_2, \ B^*_0\equiv U_{2}.\]
To simplify comparison, let the initial condition $Y_0$ in our model be degenerate at zero so that the structural equation in the first period is simplified to $Y_1 = X_1\beta_1 + U_1$, where $\beta_1$ is a random coefficient.
However, it is known from \citet{Imbens2007} and \citet{Kasy2011} that this cannot be fitted into the control function framework above.

\citet{HoderleinHolzmannMeister2017} consider a triangular model with random coefficients:
\begin{align*}
        \widetilde Y &= \widetilde B_0 + \widetilde B_1 \widetilde X + \widetilde B_2'\widetilde W, \\
    \widetilde X &= \widetilde A_0 + \widetilde A_1'\widetilde Z + \widetilde A_2'\widetilde W,
\end{align*}
where $\widetilde X, \widetilde Y$ are scalars. Let $\widetilde A\equiv (\widetilde A_0, \widetilde A_1', \widetilde A_2')'$; likewise for $\widetilde B$. The main identifying conditions in \citet{HoderleinHolzmannMeister2017} (Theorem 7) include:
\begin{align*}
(i)  \ (\widetilde Z',\widetilde W')'\perp(\widetilde A',\widetilde B')'; \
     (ii) \ \text{the support of} \ \widetilde Z \text{ contains an open interval}; \
     (iii)  \ \widetilde A_1 \perp \widetilde B.
\end{align*}
To map this setup into our model (with $T=2$ and suppressing $i$), let
\begin{align*}
    &  \widetilde Y \equiv Y_2 , \ \widetilde X\equiv Y_1 , \ \widetilde Z \equiv Y_0 , \ \widetilde W \equiv X \equiv (X_{1},X_{2}),\\
    &  \widetilde B_1\equiv\gamma , \ \widetilde B_2\equiv (0,\beta_2)', \ \widetilde B_0 \equiv U_2, \ \widetilde A_1 \equiv \gamma, \ \widetilde A_2 \equiv (\beta_1,0)', \ \widetilde A_0 \equiv U_1. \nonumber
\end{align*}
 Our identification result differs from \citet{HoderleinHolzmannMeister2017} in that we relax (i)--(iii) within the specific context of a random-coefficient dynamic panel data model, and also identify the joint distribution of random \textit{intercepts}.
First, we allow the random coefficients $(\widetilde A',\widetilde B')'$, which consist of $(\gamma,\beta,U_1,U_2)$ in our case, to be correlated with $\widetilde Z$, which is $Y_0$ in our case.
Second, we allow $\widetilde Z$ ($Y_0$) to have limited or even discrete support. In fact, our identification strategy conditions on $\widetilde Z$ ($Y_0$) and only exploits strict exogeneity in $\widetilde W$ ($X_i$).
Third, in our model, $\widetilde A_1=\widetilde B_1=\gamma$, which is allowed to be correlated with $(\beta,U_1,U_2)$ in $(\widetilde A_0, \widetilde B_0, \widetilde A_2', \widetilde B_2')$.
Last but not least, we also identify the \textit{joint} distribution of $(\widetilde A_0,\widetilde B_0)=(U_1,U_2)$ and $(\gamma,\beta)$ in $(\widetilde A_1, \widetilde A_2', \widetilde B_1, \widetilde B_2')$, whereas Theorem 7 in \citet{HoderleinHolzmannMeister2017} identifies the joint distribution of $(\widetilde B_0,\widetilde B_1)$.\footnote{Theorem 5 in \citet{HoderleinHolzmannMeister2017} identifies the joint distribution of $(\widetilde B_0, \widetilde B_1, \widetilde B_2')'$ under a stronger condition that the support of $(\widetilde Z',\widetilde W')'$ is $\mathbb R^{\dim(\widetilde Z)+\dim(\widetilde W)}$.}

To be clear, we are able to obtain these results without (i)--(iii) because of the specific structure in the random-coefficient dynamic panel data model.

\citet[Proposition 4]{masten2018random} studied a triangular system
\begin{align*}
    Y_2 & =   \gamma  Y_1  +  \delta_2^{*'} W^* + U_2,\\
    Y_1 & =   \beta_1 X_1  + \delta_1^{*'} W^* + U_1,
\end{align*}
where \(X_1 \perp (\gamma,\beta_1,\delta_1^{*},\delta_2^{*},U_1,U_2)\mid  W^*\), and identified the distribution of $(\gamma,\beta_1)\mid W^*$.

Our model in (\ref{eq:seq-exo-w}) can be partly reconciled with that of \citet{masten2018random}, with $W^*$ consisting of $(W_{i1},W_{i2},Y_{i0})'$, $\delta^*_1\equiv (\delta_1,0,\gamma)'$, $\delta^*_2 \equiv (0,\delta_2,0)'$, and the random coefficient for $Y_0$ in the structural form of $Y_1$ identical to that for $Y_1$ in the structural form of $Y_2$.
Nonetheless, our model is non-nested with \citet{masten2018random}, because we include a strictly exogenous regressor $X_2$ with a random coefficient $\beta_2$ in the equation for $Y_2$ and identify the joint distribution of $(\gamma,\beta_1,\beta_2)$.

Yet an even more important distinction between our work and \citet{masten2018random} is that we are able to identify the joint distribution of these random coefficients \textit{and the random intercepts} $(U_1,U_2)$, under the additional assumptions that $\delta_1,\delta_2$ are constants,\footnote{
    \citet{masten2018random} does not provide results for identifying the distribution of these random coefficients $(\delta_1,\delta_2)$ in his model.}
and that the random coefficients and intercepts are also conditionally independent from $W_{i1}$.

\bibliographystyle{apecon}
\bibliography{ref}

\end{document}